\newtheorem{definition}{Definition}%[section]
\newtheorem{theorem}{Theorem}%[section]
\newcommand{\Bem}[1]{}
\newcommand{\rys}[1]{}
\newcommand{\affil}[1]{\\ #1 \\}
\begin{document}
%\runninghead{M.K{\l}opotek et al. }{Traditional PageRank versus Network Capacity Bound }
\title{Traditional PageRank versus Network Capacity Bound 
}
\author{
MIECZYS{\L}AW A.K{\L}OPOTEK, S{\L}AWOMIR T.WIERZCHOÑ  \affil{Institute of Computer Science of   Polish Academy of Sciences\\ Warszawa, Poland}  
ROBERT A. K{\L}OPOTEK\affil{Stefan Cardinal Wyszy\'{n}ski University in Warsaw\\ Warszawa, Poland}  
EL¯BIETA A. K{\L}OPOTEK \affil{mBank, Warszawa, Poland}
}

\maketitle

\begin{abstract}
\noindent
In a former paper \cite{Bipartite:2012} we  
simplified the proof of a theorem on
personalized random walk that is fundamental
to graph nodes clustering
and   generalized it to bipartite graphs
for a specific case where the proobability of random jump was proprtional to the number of links of "personally prefereed" nodes.
In this paper we turn to the more complex issue of graphs in which the random jump follows uniform distribution.
\end{abstract}
 
\section{Introduction}

The PageRank is widely used as a (main or supplementary) measure of importance of a web page since its publication  in \cite{Page:1999}. 
Subsequently the idea was explored with respect to   methods of computation \cite{Berkhin:2005}, application areas   (Web page ranking, client and seller ranking, clustering, classification of web pages, word sense disambiguation, spam detection, detection of dead pages etc.)
and application related variations (personalized PageRank, topical PageRank, Ranking with Back-step, Query-Dependent PageRank, Lazy Walk Pagerank etc.), \cite{bibliography:PageRank}.

The traditional PageRank reflects the probability 
that a random walker reaches a given webpage.
The walker, upon entering a webpage, follows with uniform probability one of the outgoing edges unless he gets bored or there are no outgoing edges. 
In this case he jumps to any  web page with uniform probability. 

As already mentioned, one of the application areas 
of PageRank is creation of new clustering methods especially for graphs, including undirected\footnote{Unoriented graphs
have multiple applications as means to represent   relationships
spanned by a network of friends,
telecommunication infrastructure
or street network of a city}  graphs in which we are interested in this paper. 
One of clues for clustering of graphs assumes that a good cluster has low probability   to be left by a random walker. Though the concept seems to be plausible, is has been investigated theoretically only for a very special case of a random walker (different from the traditional walker),  performing the "boring  jump"  with probability being   proportional to
 the number of incident edges (and not uniformy)  -- see e.g. \cite{Chung:2011,Bipartite:2012}.

\Bem{
\begin{figure}
   \centering
   \includegraphics[width=0.9\textwidth]{\rys{}siecv.jpg}\\
\caption{An unoriented network}
\label{siecv}
\end{figure}
}%Bem

In this paper  we will make an attempt to
extend this result to the case when the "boring jump" is performed uniformly (as in case of traditional walker)  (Section 2) and to generalize it to bipartite graphs (Section 3).

PageRank computation for bipartite graphs was investigated already in the past in the context of social networks, 
  e.g. when concerning mutual evaluations of  students and lecturers \cite{Link:2011}, reviewers and movies in a movie recommender systems, or authors and papers in scientific literature or queries and URLs in query logs   \cite{DLK09}, or performing image tagging \cite{Bauckhage:2008}. 
As pointed at in \cite{Bipartite:2012},   
 the bipartite graphs have explicitly a periodic structure while PageRank aims at graph aperiodicy. 
 Therefore a suitable generalization of PageRank
to bipartite structure is needed and we will follow here the proposals made in \cite{Bipartite:2012}.
 
\Bem{
\begin{figure}
   \centering
   \includegraphics[width=0.9\textwidth]{\rys{}wevdrowcy.jpg}\\
\caption{Random walker interpretation of PageRank}
\label{wevdrowcy}
\end{figure}
}%Bem

\section{Traditional  PageRank}

One of the many interpretations of PageRank views it as 
 the probability that a knowledgeable (knowing addresses of all the web pages) but  mindless (choosing next page to visit without regard to any content hints) random walker  will encounter a given Web page. 
 So  upon entering a particular web page, if it has no outgoing links, the walker jumps to any Web page with uniform probability. If there are outgoing links, he  chooses with uniform  probability one of the outgoing links and goes to the selected web page, unless he gets bored.
If he gets bored (which may happen with a fixed probability
$\zeta$ on any page), he jumps to any Web page with uniform probability.

One of the modifications of this behavior (called personalized PageRank) was 
  a mindless page-$u$-fan
random walker who is doing exactly the same, but in case of a jump out of boredom he does not jump to any page, but to the page $u$.\footnote{ If there exists one page-fan for each web page then the PageRank vector  of the knowledgeable walker
is the average of PageRank vectors of all these page-fan walkers} 

Also there are plenty possibilities of other mindless walkers
between these two extremes. For example upon being bored
the walker can jump to a page from a set $U$ with a uniform probability or with probability proportional to the out-degree of the pages. An unacquainted reader is warmly referred to \cite{LM06} for a  detailed treatment of these topics.

\Bem{
\begin{figure}
   \centering
\includegraphics[width=0.9\textwidth]{\rys{}boundaryVolume.jpg}\\
\caption{A preferred group of pages of a random walker}
\label{boundaryVolume}
\end{figure}
 }%\Bem 

Let us recall formalization of these concepts.  
 With $\mathbf{r}$ we will denote a (column) vector of ranks: $r_j$ will mean the PageRank of page $j$. All elements of $\mathbf{r}$ are non-negative and their sum equals 1.

Let $\mathbf{P} = [p_{ij}]$ be a matrix such that if there is a link from page $j$ to page $i$, then $p_{i,j}=\frac{1}{outdeg(j)}$,
where $outdeg(j)$ is the out-degree of node $j$\footnote{
For some versions of PageRank, like TrustRank
$p_{i,j}$ would differ from $\frac{1}{outdeg(j)}$
giving preferences to some outgoing links over the other.
We are not interested in such considerations here.}.
In other words, $\mathbf{P}$ is column-stochastic matrix satisfying $\sum_i p_{ij} = 1$ for each column $j$. If a node had an out-degree equal 0, then prior to construction of $\mathbf{P}$ the node is replaced by one with edges outgoing to all other nodes of the network.

Under these circumstances we have

\begin{equation}
\mathbf{r}=(1-\zeta){\cdot} \mathbf{P}{\cdot}\mathbf{r} + \zeta {\cdot} \mathbf{s} \label{eqPageRankDef}
\end{equation}

\noindent where $\mathbf{s}$ is the so-called ``initial'' probability distribution (i.e. a column vector with non-negative elements summing up to 1) that is also interpreted as a vector of Web page preferences.\footnote{
We will denote the solution to the equation 
(\ref{eqPageRankDef}) with 
$\mathbf{r}^{(t)}(\mathbf{P},\mathbf{s},\zeta)$. 
}

For a knowledgeable walker for each node $j$ of the network  $s_j=\frac{1}{|N|}$, where $|N| $ is the cardinality of the set of nodes $N$ constituting the network. For a page-$u$-fan we have $s_u=1$, and $s_j=0$ for any other page $j\ne u$. 
For  a uniform-set-$U$-fan\footnote{We will call the set $U$ "fan-pages" or "fan-set" or "fan-nodes"}  we get

\[
s_j = \left\{
\begin{array}{ll}
\displaystyle\frac{1}{|U|} & \textrm{if $j \in U$}\vspace{0.1cm}\\
0 & \textrm{otherwise}
\end{array} \right.,\ j = 1, \dots |N|
\label{eq-uni}
\]

\noindent and for a hub-page-preferring-set-$U$-fan we obtain

\begin{equation}
s_j = \left\{
\begin{array}{ll}
\displaystyle\frac{outdeg(j)}{\sum_{k\in U} outdeg(k)} & \textrm{if $j \in U$}\vspace{0.1cm}\\
0 & \textrm{otherwise}
\end{array} \right.,\ j = 1, \dots |N|
\label{eq-hub}
\end{equation}

The former case is the topic of this paper, the second was considered in our former paper \cite{Bipartite:2012}.

Instead of a random walker model we can view a Web as a pipe-net through which the authority is flowing in discrete time steps.

In single time step a fraction $\zeta$ of the authority of a node $j$ flows into so-called \emph{super-node}, and the fraction $\frac{1-\zeta}{outdeg(j)}$ is send from this node to each of its children in the graph. After the super-node has received authorities from all the nodes, it redistributes the authority to all the nodes in fractions defined in the vector $\mathbf{s}$. Note that the authority circulates lossless (we have a kind of a closed loop here).

Beside this, as was proven in many papers, we have to do here with a self-stabilizing process. Starting with any stochastic vector $\mathbf{r}^{(0)}$ and applying the operation

$$\mathbf{r}^{(n+1)}=(1-\zeta){\cdot}\mathbf{P}{\cdot}\mathbf{r}^{(n)}+\zeta{\cdot}s$$

\noindent the series $\{\mathbf{r}^{(n)}\}$ will converge to
$\mathbf{r}$ being the solution of the equation (\ref{eqPageRankDef}) (i.e. to the main eigenvector corresponding to eigenvalue 1).

Subsequently let us consider only connected graphs (one-component graphs) with symmetric links, i.e. undirected graphs. Hence for each node $j$ the relationships between in- and out-degrees are:  $$ indeg(j)=outdeg(j)=deg(j)$$ Let us pose the question: how is the PageRank of $U$-set pages related to the PageRank of other pages (that is those pages where there are no jumps out of being bored)?

In a former paper we have proven \cite{Bipartite:2012}
\begin{theorem} \label{thPRlimtPreferential}
For the preferential personalized PageRank we have

$$p_o\zeta\le (1-\zeta)\frac{|\partial(U)|}{Vol(U)}$$

\noindent where $\partial(U)$ is the set of edges leading from $U$ to the nodes outside of $U$ (the so-called ``edge boundary of $U$''), hence $|\partial(U)|$ is the cardinality of the boundary, and $Vol(U)$, called volume or capacity of $U$ is the sum of out-degrees of all nodes from $U$.
\end{theorem}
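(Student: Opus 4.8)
\bigskip
\noindent\textbf{Proof proposal.}
The plan is to convert the fixed-point equation~(\ref{eqPageRankDef}) into a conservation law for authority across the edge boundary $\partial(U)$, and then to bound the outgoing flow by a maximum-principle estimate on the degree-normalized ranks $q_j:=r_j/deg(j)$. It is in this last estimate that the \emph{preferential} (degree-proportional) shape of the teleportation vector~(\ref{eq-hub}) enters in an essential way; for a uniform teleportation vector the estimate breaks down, which is precisely the extra difficulty the rest of the paper must cope with.

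First I would sum the $i$-th coordinate of~(\ref{eqPageRankDef}) over $i\in U$. Writing $p_U=\sum_{i\in U}r_i$ for the PageRank mass inside $U$, so that $p_o=1-p_U$, and using $\sum_{i\in U}s_i=1$, the column-stochasticity of $\mathbf{P}$, and the symmetry $indeg(j)=outdeg(j)=deg(j)$, a short rearrangement produces
\begin{equation}
\zeta\,p_o=(1-\zeta)\,(F_{\mathrm{out}}-F_{\mathrm{in}}),
\label{eqFlowBalance}
\end{equation}
where $F_{\mathrm{out}}=\sum_{\{j,i\}\in\partial(U),\,j\in U} q_j$ is the authority leaving $U$ along boundary edges and $F_{\mathrm{in}}=\sum_{\{j,i\}\in\partial(U),\,j\notin U} q_j\ge 0$ the authority entering $U$ along boundary edges. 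Dropping the non-negative term $F_{\mathrm{in}}$ gives $\zeta\,p_o\le(1-\zeta)\,F_{\mathrm{out}}$, so the whole claim reduces to showing $F_{\mathrm{out}}\le|\partial(U)|/Vol(U)$; since $F_{\mathrm{out}}$ is a sum of exactly $|\partial(U)|$ terms, each of the form $q_j$, it suffices to prove $q_j\le 1/Vol(U)$ for every node $j$.

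This uniform bound on $q_j$ is the crux of the proof. Let $M=\max_j q_j$, attained at a node $j^\star$; note $M>0$ because $\mathbf{r}$ is a probability vector. The $j^\star$-coordinate of~(\ref{eqPageRankDef}) reads $deg(j^\star)\,q_{j^\star}=(1-\zeta)\sum_{k\sim j^\star}q_k+\zeta\,s_{j^\star}$, and replacing each $q_k$ by the bound $M$ yields $\zeta\,deg(j^\star)\,M\le\zeta\,s_{j^\star}$, i.e.\ $deg(j^\star)\,M\le s_{j^\star}$. If $j^\star\notin U$ then $s_{j^\star}=0$, forcing $M\le 0$, a contradiction; hence the maximum is attained inside $U$, where $s_{j^\star}=deg(j^\star)/Vol(U)$, and the inequality collapses to $M\le 1/Vol(U)$. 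Feeding this back in, $F_{\mathrm{out}}\le|\partial(U)|\,M\le|\partial(U)|/Vol(U)$, and combining with~(\ref{eqFlowBalance}) we obtain $\zeta\,p_o\le(1-\zeta)\,|\partial(U)|/Vol(U)$, as claimed.

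I expect the only delicate points to be bookkeeping. Deriving~(\ref{eqFlowBalance}) requires tracking which endpoint of a boundary edge carries the $1/deg$ factor, which is where symmetry of the graph is used: an edge counted once as leaving $U$ is the very same edge counted once as entering $\overline{U}$, so the two sums run over the same edge set $\partial(U)$. One should also observe that the dangling-node preprocessing mentioned just before~(\ref{eqPageRankDef}) plays no role here, since in a connected undirected graph every node already has positive out-degree, so $\mathbf{P}$ is the plain normalized adjacency matrix.
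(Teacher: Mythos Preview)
Your proof is correct and takes a genuinely different route from the paper's. The paper does not reprove Theorem~\ref{thPRlimtPreferential} here (it cites \cite{Bipartite:2012}), but the method it deploys for the companion Theorem~\ref{thPRlimit}---and, by the paper's own account, the method of the cited source---is \emph{dynamic}: one fixes an initial authority distribution supported on $U$, observes that the flow through every link is at most a constant $\gamma$ (for the preferential case $\gamma=(1-\zeta)/Vol(U)$), and shows by induction over time steps of the power iteration that this per-link bound propagates forever; at the fixed point, conservation (what leaves $U$ through $\partial(U)$ must balance what enters from the super-node, namely $p_o\zeta$) yields the inequality. You instead work \emph{statically} with the fixed-point equation: summing over $U$ produces the exact balance $\zeta\,p_o=(1-\zeta)(F_{\mathrm{out}}-F_{\mathrm{in}})$, and a maximum-principle argument on $q_j=r_j/deg(j)$ bounds $F_{\mathrm{out}}$.

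The two arguments converge on the same key estimate $q_j\le 1/Vol(U)$---your $M\le 1/Vol(U)$ is precisely the statement that at stationarity no link carries more than $\gamma/(1-\zeta)$ of authority---but yours reaches it without invoking the iteration or its convergence, and it isolates cleanly \emph{why} preferential teleportation is special: at the maximizing node the inequality $deg(j^\star)\,M\le s_{j^\star}$ becomes degree-free exactly because $s_{j^\star}\propto deg(j^\star)$. The paper's inductive picture, in turn, is more physical and ports almost verbatim to the bipartite setting of Section~4, where tracking a single per-link flow bound through alternating steps is easier than setting up a two-sided maximum principle.
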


Let us discuss now a uniform-set-$U$-fan  defined in equation (\ref{eq-uni}).  
Let us now turn to the situation where $U$ is only a proper subset of $N$, and assume that

\begin{equation}
r_j^{(t)} = \left\{
\begin{array}{ll}
\displaystyle\frac{1}{|U||} & \textrm{if $j \in U$}\vspace{0.1cm}\\
0 & \textrm{otherwise}
\end{array} \right.,\ j = 1, \dots |N|
\label{eq-uniA1}
\end{equation}

\noindent in a moment $t$. To find the distribution $\mathbf{r}^{(t')}$ for $t'>t$ we state that if in none of the links the passing amount of authority will exceed $$\gamma=(1-\zeta)\frac{1}{|U| \min_{k\in U} deg(k)}$$
then at any later time point $t'>t$ the inequality
$r_{j}^{(t')}\le deg(j)*\gamma + \frac{\zeta }{|U|}  $ holds at any node $j \in U$.

To justify this statement note that if a node $j\not\in U$ gets via links $l_{j,1},...,l_{j,deg(j)}$ the authority amounting to

$$a_{l_{j,1}}\le \gamma,...,a_{l_{j,deg(j)}}\le \gamma$$

\noindent then it accumulates

\[\mathfrak{a}_j = \sum_{k=1}^{deg(j)} a_{j,k} \le \gamma{\cdot}deg(j)\]

\noindent of total authority, and in the next time step the following amount of authority flows out through each of these links:

$$(1-\zeta)\frac{\mathfrak{a}_j}{deg(j)}
\le \gamma(1-\zeta)\le \gamma$$

If a node $j \in U$ gets via incoming links $l_{j,1},...,l_{j,deg(j)}$ the authority amounting to
$a_{l_{j,1}}\le \gamma,...,a_{l_{j,deg(j)}}\le \gamma$
then, due to the authority obtained from the super-node
equal to $ \mathfrak{b}_j = \zeta \frac{1}{|U| } \le deg(j)\gamma\frac{\zeta}{1-\zeta}$, in the next step through each link the authority amounting to

$$(1-\zeta)\frac{\mathfrak{a}_j}{deg(j)}+(1-\zeta)\frac{\mathfrak{b}_j}{deg(j)}
\le \gamma(1-\zeta)+ \gamma\frac{\zeta}{1-\zeta}(1-\zeta)
$$ $$=\gamma(1-\zeta)+\gamma\zeta = \gamma$$
flows out.

So if already at time point $t$ the authority flowing out through any link from any node did not exceed $\gamma$, then this property will hold
(by induction) forever, especially for the equation solution $\mathbf{r}$ which is unique. 
 
%-----

Now let us ask: ``How much authority from outside of $U$
can flow into $U$ via super-node at the point of stability?''
Let us denote by $p_o$ the total mass of authority contained in all the nodes outside of $U$. Then our question concerns the quantity $p_o\zeta$. 
We  claim  that  

\begin{theorem} \label{thPRlimit}
For the uniform personalized PageRank we have

$$p_o\zeta\le 
=(1-\zeta)\frac{|\partial(U)| }{|U| \min_{k\in U} deg(k)}
$$
\noindent % where $\partial(U)$ is the set of edges leading from $U$ to the nodes outside of $U$ (the so-called ``edge boundary of $U$''), hence $|\partial(U)|$ is the cardinality of the boundary. 
\end{theorem}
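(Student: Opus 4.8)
The plan is to run an \emph{authority-conservation} argument across the cut separating $U$ from $N\setminus U$, evaluated at the fixed point $\mathbf{r}$, using as the only external ingredient the invariant already established above: at stability the amount of authority passing through \emph{every} directed link is at most $\gamma=(1-\zeta)\frac{1}{|U|\min_{k\in U}deg(k)}$. (Recall why this holds: starting the pipe-net iteration from the distribution that is uniform on $U$ and zero elsewhere, each link out of a node $j\in U$ carries $(1-\zeta)\frac{1}{|U|deg(j)}\le\gamma$ while each link out of $N\setminus U$ carries $0$, and the per-link bound is then preserved step by step; since the fixed point is unique it inherits the bound.) So I would first isolate this per-link statement as the one fact I need.

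Next I would write the one-step balance for the total authority $p_i=\sum_{j\in U}r_j$ residing in $U$, so that $p_o=1-p_i$. In one pipe-net step, $U$ gains authority from the super-node in the amount $\zeta\sum_{j\in U}s_j=\zeta$, since $\mathbf{s}$ is supported on $U$ and sums to $1$ (this is the only place the uniform-set-$U$-fan hypothesis enters, and only through the fact that $\mathbf{s}$ vanishes off $U$); and it gains, across the boundary, a quantity $F_{\mathrm{in}}\ge 0$ equal to the sum of the flows carried by the boundary edges in the direction $N\setminus U\to U$. Symmetrically $U$ loses $\zeta p_i$ into the super-node and loses $F_{\mathrm{out}}$ across the boundary edges in the direction $U\to N\setminus U$. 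Stationarity of $\mathbf{r}$ forces inflow $=$ outflow, i.e. $\zeta+F_{\mathrm{in}}=\zeta p_i+F_{\mathrm{out}}$, which rearranges to $\zeta p_o=F_{\mathrm{out}}-F_{\mathrm{in}}$.

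Finally I would bound the right-hand side. Each undirected boundary edge yields exactly one directed edge pointing out of $U$, so $F_{\mathrm{out}}$ is a sum of $|\partial(U)|$ link-flows, each at most $\gamma$ by the invariant; hence $F_{\mathrm{out}}\le|\partial(U)|\gamma$. Dropping the nonnegative term $F_{\mathrm{in}}$ gives $\zeta p_o\le|\partial(U)|\gamma=(1-\zeta)\frac{|\partial(U)|}{|U|\min_{k\in U}deg(k)}$, which is exactly the assertion.

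The one genuine point of care is the super-node bookkeeping: all of the $\zeta$ total mass returned by the super-node lands in $U$, whereas only $\zeta p_i$ of it originated in $U$, so the net injection into $U$ through the super-node is $\zeta p_o$; coupling this with the fact that the only other net channel between $U$ and its complement is the boundary edges is what makes the single linear balance equation above yield the bound. The remaining pieces — the per-link estimate and the count of exactly $|\partial(U)|$ outward directed edges across the cut — are routine. Note in particular that one should \emph{not} try to sum the per-node bound $r_j\le deg(j)\gamma+\frac{\zeta}{|U|}$, as that only lower-bounds $p_o$; the per-link bound together with conservation is what is needed.
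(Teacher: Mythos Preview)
Your proof is correct and follows essentially the same approach as the paper: combine the per-link upper bound $\gamma$ (established in the text preceding the theorem) with an authority-conservation argument across the cut $U\,|\,N\setminus U$, noting that the super-node contributes a net inflow of $\zeta p_o$ into $U$ so that $\zeta p_o\le F_{\mathrm{out}}\le\gamma\,|\partial(U)|$. Your explicit balance equation $\zeta+F_{\mathrm{in}}=\zeta p_i+F_{\mathrm{out}}$ is a slightly more careful statement of what the paper compresses into the phrase ``no authority leaves $U$ via super-node (it returns from there immediately),'' but the argument is the same.
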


\begin{proof}
Let us notice first that, due to the closed loop of authority circulation, the amount of authority flowing into $U$ from the nodes belonging to the set $\overline{U} = N \backslash U$ must be identical with the amount flowing out of $U$ to the nodes~in~$\overline{U}$.

But from $U$ only that portion of authority flows out that flows out through the boundary of $U$ because no authority leaves $U$ via super-node (it returns from there immediately). As at most the amount $\gamma |\partial(U)|$ leaves $U$, then

\[p_o\zeta\le \gamma |\partial(U)| 
=(1-\zeta)\frac{1}{|U| min_{k\in U} deg(k)}|\partial(U)| 
=(1-\zeta)\frac{|\partial(U)| }{|U| min_{k\in U} deg(k)}\]
%\qed
\end{proof}

 \Bem{
\begin{figure}
   \centering
   \includegraphics[width=0.9\textwidth]{\rys{}dwudzielny.jpg}\\
\caption{An example of a bipartite graph}
\label{dwudzielny}
\end{figure}
}%\Bem

When you compare the above two theorems \ref{thPRlimtPreferential} and \ref{thPRlimit}, you will see immediately that the bound in case of "preferential" theorem \ref{thPRlimtPreferential} is lower than in case of "uniform" theorem \ref{thPRlimit}. 

If we look more broadly at the  $s$ vector with 
$s_j>0 \ \forall_{j\in U}$
and  $s_j=0 \ \forall_{j\not\in U}$, we will derive immediately by analogy the relation 

\begin{theorem} \label{thPRlimitGeneral}
For the   personalized PageRank
with arbitrary $s$ vector such that 
$s_j>0 \ \forall_{j\in U}$
and  $s_j=0 \ \forall_{j\not\in U}$
 we have

$$p_o\zeta\le 
=(1-\zeta)\frac{|\partial(U)| }{  \min_{k\in U} \frac{deg(k)}{s_k}}
$$
\noindent % where $\partial(U)$ is the set of edges leading from $U$ to the nodes outside of $U$ (the so-called ``edge boundary of $U$''), hence $|\partial(U)|$ is the cardinality of the boundary. 
\end{theorem}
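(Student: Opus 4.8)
The plan is to run the argument of Theorem~\ref{thPRlimit} verbatim, only replacing the uniform weights $\frac{1}{|U|}$ by the general weights $s_k$ and recalibrating the per-link cap. Concretely, I would put
\[
\gamma \;=\; (1-\zeta)\,\max_{k\in U}\frac{s_k}{deg(k)}
\;=\; \frac{1-\zeta}{\displaystyle\min_{k\in U}\frac{deg(k)}{s_k}},
\]
so that the claimed bound is exactly $\zeta p_o\le \gamma\,|\partial(U)|$. The heart of the proof is then the invariant: \emph{if} the distribution is supported on $U$ and at some time $t$ the amount of authority travelling along every edge is at most $\gamma$, \emph{then} the same holds at every later time $t'$, and moreover $r_j^{(t')}\le deg(j)\,\gamma + \zeta s_j$ for every $j\in U$.

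The induction step is the same case split as in the discussion preceding Theorem~\ref{thPRlimit}. For $j\notin U$ we have $s_j=0$, so the node merely forwards what it received: it accumulates $\mathfrak{a}_j\le deg(j)\gamma$ and sends $(1-\zeta)\mathfrak{a}_j/deg(j)\le(1-\zeta)\gamma\le\gamma$ along each edge. For $j\in U$ the node additionally receives $\mathfrak{b}_j=\zeta s_j$ from the super-node, and the one inequality to be checked is
\[
\mathfrak{b}_j=\zeta s_j \;\le\; deg(j)\,\gamma\,\frac{\zeta}{1-\zeta},
\]
which is precisely $\frac{s_j}{deg(j)}\le\max_{k\in U}\frac{s_k}{deg(k)}$ and hence trivially true; it yields per-edge outflow $(1-\zeta)\frac{\mathfrak{a}_j+\mathfrak{b}_j}{deg(j)}\le(1-\zeta)\gamma+\zeta\gamma=\gamma$. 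Starting the induction from $\mathbf{r}^{(0)}=\mathbf{s}$ — at which moment nodes outside $U$ carry nothing and every edge carries $(1-\zeta)s_j/deg(j)\le\gamma$ — and invoking uniqueness of the fixed point $\mathbf{r}$, the invariant holds at stationarity.

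The conservation argument then closes things exactly as in Theorem~\ref{thPRlimit}: at the fixed point the net flow across the cut $(U,\overline{U})$ vanishes; authority leaves $U$ towards $\overline{U}$ only through the $|\partial(U)|$ boundary edges, since $s$ is supported on $U$ so nothing escapes $U$ via the super-node; hence at most $\gamma\,|\partial(U)|$ crosses outward. On the other hand the authority entering $U$ from $\overline{U}$ is routed entirely through the super-node and equals $\zeta p_o$. Therefore $\zeta p_o\le\gamma\,|\partial(U)| = (1-\zeta)\frac{|\partial(U)|}{\min_{k\in U} deg(k)/s_k}$.

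The only place that needs genuine care, and hence the expected main obstacle, is pinning down the calibration of $\gamma$: the super-node bound $\mathfrak{b}_j\le deg(j)\gamma\frac{\zeta}{1-\zeta}$ must hold \emph{uniformly} over $j\in U$, and this is exactly what forces $\gamma$ to involve $\max_{k\in U}\frac{s_k}{deg(k)}$ (equivalently $\min_{k\in U}\frac{deg(k)}{s_k}$) rather than a value tied to a single node. Everything else — the base case $\mathbf{r}^{(0)}=\mathbf{s}$, the forwarding estimate for nodes outside $U$, and the cut/closed-loop bookkeeping — is routine and identical to the uniform case.
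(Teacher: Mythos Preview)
Your proposal is correct and is precisely the ``by analogy'' argument the paper gestures at without writing out: you substitute the general weights $s_k$ for $\tfrac{1}{|U|}$, recalibrate $\gamma$ to $(1-\zeta)\max_{k\in U}s_k/deg(k)$ so that the super-node inequality $\mathfrak{b}_j\le deg(j)\gamma\tfrac{\zeta}{1-\zeta}$ holds uniformly, and then the induction and the cut-balance step go through unchanged. One wording nit: the authority entering $U$ from $\overline{U}$ is not routed \emph{entirely} through the super-node (boundary edges also carry some back), but since $\zeta p_o$ is merely one nonnegative summand of the total inflow, the inequality $\zeta p_o\le\gamma\,|\partial(U)|$ follows all the same, exactly as in the paper's proof of Theorem~\ref{thPRlimit}.
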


\section{Variants of the theorems}

In this section 
our attention is concentrated
on some versions of PageRank related to diverse methods of random walk that have distinct semantic connotations. 
  Each of the four versions mentioned below  represents semantically different behavior of the surfers and hence the respective PageRank has different commercial values as placement if e.g. advertisement is concerned.

The previously considered traditional PageRank represents a random walker
travelling at uniform pace through the Internet. The lazy walk PageRank reflects the distribution of walkers that may stay longer at a given page than just a single unit of time. The generalized lazy walk PageRank allows for simulation of leaning either towards jumping or ``reading'' of a Web page. The random walk with backstep PageRank refers to random walkers that may withdraw from a step forward if they find the page uninteresting. Two brands have been studied: one going only one step backwards before going forward and one with deeper backsteps.

Though these behaviors seem to be semantically quite distinct,
it will be demonstrated below 
 that mathematically they can be reduced to a single form but by differentiating the boring factor. This means that in fact we need only one version of any PageRank related algorithm for computation in each case, in particular the ones related to clustering issues. On the other hand also theorems, in our case concerning authority flow limits, can be easily transferred between the models This is good news. The bad one is that by observing the PageRank vector alone we cannot decide which type of random walk we have to do with. Even the knowledge of the boring factor is insufficient to distinguish between generalized random walk and random walk with backstep.

\subsection{Lazy Random Walk PageRank}

A variant of PageRank, 
so-called \emph{lazy-random-walk-PageRank}
was described e.g. by \cite{Chung:2008x}.
It differs from the traditional PageRank
in that the random walker
before choosing the next page to visit
he fist tosses a coin and upon heads 
he visits the next page
and upon tails he stays in the very same node of the network.

Let us recall here a known relationship.

For the lazy walker PageRank we have:

\begin{equation}
\mathbf{r}^{(l)}=(1-\zeta){\cdot}\left(0.5 \mathbf{I} + 0.5 \mathbf{P}\right){\cdot} \mathbf{r}^{(l)} + \zeta {\cdot} \mathbf{s} \label{eqLazyPageRankDef}
\end{equation}
where $\mathbf{I}$ is the identity matrix.
\footnote{
We will denote the solution to the equation 
(\ref{eqLazyPageRankDef}) with 
$\mathbf{r}^{(l)}(\mathbf{P},\mathbf{s},\zeta)$. 
}

One can easily guess relation to the traditional PageRank. 
Let us transform: First multiply by 2 
$$
2\mathbf{r}^{(l)} =(1-\zeta){\cdot}\left( \mathbf{I}+ \mathbf{P}\right){\cdot} \mathbf{r}^{(l)} + 2\zeta {\cdot} \mathbf{s} 
$$
Now subtract 
$$
2\mathbf{r}^{(l)}-(1-\zeta) \mathbf{I}\mathbf{r}^{(l)} =(1-\zeta){\cdot}\left( \mathbf{P}\right){\cdot} \mathbf{r}^{(l)} + 2\zeta {\cdot} \mathbf{s}  
$$

$$
(1+\zeta) \mathbf{r}^{(l)}  =(1-\zeta){\cdot}\left( \mathbf{P}\right){\cdot} \mathbf{r}^{(l)} + 2\zeta {\cdot} \mathbf{s}  
$$
and divide by $(1+\zeta)$
$$
 \mathbf{r}^{(l)}  =\frac{1-\zeta}{1+\zeta}{\cdot}\left( \mathbf{P}\right){\cdot} \mathbf{r}^{(l)} + \frac{2\zeta}{1+\zeta} {\cdot} \mathbf{s}  
$$

This means that 
 $\mathbf{r}^{(l)} $   for $\zeta$ 
is the same as 
 $\mathbf{r}^{(t)} $   for $ \frac{2\zeta}{1+\zeta} $ 
($\mathbf{r}^{(l)}(\mathbf{P},\mathbf{s},\zeta)=
 \mathbf{r}^{(t)}(\mathbf{P},\mathbf{s},\frac{2\zeta}{1+\zeta})$)

\Bem{
zeta_t=  
$$p_o\zeta\le (1-\zeta)\frac{|\partial(U)|}{Vol(U)}$$
$$p_o\frac{2\zeta}{1+\zeta}\le (1-\frac{2\zeta}{1+\zeta})\frac{|\partial(U)|}{Vol(U)}$$
$$p_o {2\zeta}\le ({1+\zeta}- {2\zeta} )\frac{|\partial(U)|}{Vol(U)}$$
$$p_o {2\zeta}\le ({1-\zeta} )\frac{|\partial(U)|}{Vol(U)}$$
$$p_o {\zeta}\le \frac{{1-\zeta} }{2}\frac{|\partial(U)|}{Vol(U)}$$
}%\Bem

Under these circumstances we have 
\begin{theorem} \label{thPRlimtPreferentialLazy}
For the preferential lazy personalized PageRank we have
$$p_o {\zeta}\le \frac{{1-\zeta} }{2}\frac{|\partial(U)|}{Vol(U)}$$
\noindent % where $\partial(U)$ is the set of edges leading from $U$ to the nodes outside of $U$ (the so-called ``edge boundary of $U$''), hence $|\partial(U)|$ is the cardinality of the boundary, and $Vol(U)$, called volume or capacity of $U$ is the sum of out-degrees of all nodes from $U$.
\end{theorem}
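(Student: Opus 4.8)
The plan is to avoid any new random-walk argument and instead piggy-back on the reduction already established above together with Theorem~\ref{thPRlimtPreferential}. Recall that the lazy walker PageRank for boredom parameter $\zeta$ coincides with the traditional PageRank for the parameter $\zeta' = \frac{2\zeta}{1+\zeta}$, that is $\mathbf{r}^{(l)}(\mathbf{P},\mathbf{s},\zeta) = \mathbf{r}^{(t)}(\mathbf{P},\mathbf{s},\zeta')$, and that this identity holds for the hub-page-preferring vector $\mathbf{s}$ of equation~(\ref{eq-hub}) exactly as for any other $\mathbf{s}$. In particular the quantity $p_o$ --- the total PageRank mass sitting at nodes outside $U$ --- is the same number whether we read it off $\mathbf{r}^{(l)}$ at parameter $\zeta$ or off $\mathbf{r}^{(t)}$ at parameter $\zeta'$, since the two vectors are literally equal.

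First I would apply Theorem~\ref{thPRlimtPreferential} to the traditional preferential PageRank with boredom parameter $\zeta'$, obtaining $p_o\,\zeta' \le (1-\zeta')\frac{|\partial(U)|}{Vol(U)}$. Then I would substitute $\zeta' = \frac{2\zeta}{1+\zeta}$ and use $1-\zeta' = \frac{1-\zeta}{1+\zeta}$, so that the inequality becomes $p_o\,\frac{2\zeta}{1+\zeta} \le \frac{1-\zeta}{1+\zeta}\,\frac{|\partial(U)|}{Vol(U)}$. Multiplying both sides by $\frac{1+\zeta}{2}$, which is positive for $\zeta\in(0,1)$, clears the denominator $1+\zeta$ and the factor $2$ and leaves $p_o\,\zeta \le \frac{1-\zeta}{2}\,\frac{|\partial(U)|}{Vol(U)}$, which is exactly the claim.

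There is essentially no analytic obstacle here; the work is entirely bookkeeping, and I expect the only place one can slip is the meaning of the left-hand side. In Theorem~\ref{thPRlimtPreferential} the product $p_o\zeta'$ measures the authority injected into $U$ through the super-node per time step, so when we pass to the lazy model we must be explicit that the present theorem states the bound in terms of the \emph{original} $\zeta$, not the reduced $\zeta'$; confusing the two would make the constant $\frac{1}{2}$ disappear. Once this is kept straight the derivation is exactly the short computation in the previous paragraph, and the same substitution would transfer Theorems~\ref{thPRlimit} and~\ref{thPRlimitGeneral} to their lazy counterparts, with $Vol(U)$ replaced by the corresponding denominator $|U|\min_{k\in U} deg(k)$ or $\min_{k\in U}\frac{deg(k)}{s_k}$.
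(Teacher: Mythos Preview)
Your proposal is correct and is essentially the same approach the paper takes: the paper first reduces the lazy walker to the traditional walker via $\zeta'=\frac{2\zeta}{1+\zeta}$ and then (in the suppressed computation immediately preceding the theorem) substitutes this $\zeta'$ into Theorem~\ref{thPRlimtPreferential} and simplifies exactly as you do. Your remark that $p_o$ is literally the same number in both models, because the two PageRank vectors coincide, makes explicit the one point the paper leaves implicit.
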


\begin{theorem} \label{thPRlimitLazy}
For the uniform lazy personalized PageRank we have

$$p_o{\zeta}\le \frac{{1-\zeta} }{2}\frac{|\partial(U)| }{|U| \min_{k\in U} deg(k)}
$$
\noindent % where $\partial(U)$ is the set of edges leading from $U$ to the nodes outside of $U$ (the so-called ``edge boundary of $U$''), hence $|\partial(U)|$ is the cardinality of the boundary. 
\end{theorem}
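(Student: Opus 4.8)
The plan is to reduce the lazy case to the traditional case already settled in Theorem~\ref{thPRlimit}, exploiting the identity obtained just above, namely
$$\mathbf{r}^{(l)}(\mathbf{P},\mathbf{s},\zeta)=\mathbf{r}^{(t)}\Bigl(\mathbf{P},\mathbf{s},\tfrac{2\zeta}{1+\zeta}\Bigr).$$
Write $\zeta'=\frac{2\zeta}{1+\zeta}$. Since the two stationary vectors are literally equal, every quantity defined through the stationary distribution is the same in both models; in particular $p_o$, the mass residing outside $U$, is unchanged, and the vector $\mathbf{s}$ is still the uniform-over-$U$ vector, so the hypotheses of Theorem~\ref{thPRlimit} are met. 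Moreover $\zeta\in(0,1)$ forces $\zeta'\in(0,1)$, so $\zeta'$ is an admissible boring factor and Theorem~\ref{thPRlimit} applies to it verbatim.

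First I would invoke Theorem~\ref{thPRlimit} with $\zeta$ replaced by $\zeta'$, which gives
$$p_o\,\zeta'\le(1-\zeta')\frac{|\partial(U)|}{|U|\min_{k\in U}deg(k)}.$$
Then I would substitute $\zeta'=\frac{2\zeta}{1+\zeta}$, using $1-\zeta'=\frac{1-\zeta}{1+\zeta}$ and $p_o\zeta'=\frac{2\zeta}{1+\zeta}p_o$, and finally multiply both sides by $\frac{1+\zeta}{2}$ to clear the common factor $1+\zeta$. This yields exactly
$$p_o\,\zeta\le\frac{1-\zeta}{2}\cdot\frac{|\partial(U)|}{|U|\min_{k\in U}deg(k)},$$
which is the assertion. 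The very same substitution applied to Theorem~\ref{thPRlimtPreferential} produces Theorem~\ref{thPRlimtPreferentialLazy}, so both lazy bounds follow in one move.

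There is no genuinely hard step here; the only care needed is the bookkeeping behind the symbol ``$p_o\zeta$''. In equation~(\ref{eqLazyPageRankDef}) the fraction routed to the super-node on every page is still $\zeta$, so ``$p_o\zeta$'' again denotes the mass the super-node collects from $\overline{U}$, and the closed-loop balance used in the proof of Theorem~\ref{thPRlimit} --- that this mass equals the net authority crossing $\partial(U)$ --- carries over unchanged. If a self-contained argument were preferred over the reduction, one could instead rerun the pipe-net induction of Section~2 directly on the lazy update $\mathbf{r}\mapsto(1-\zeta)(0.5\mathbf{I}+0.5\mathbf{P})\mathbf{r}+\zeta\mathbf{s}$, with an appropriately rescaled per-link cap, and verify that the ``no link carries more than the cap'' invariant is preserved by the lazy step as well; the extra self-loop term is the only thing that makes this route more delicate, which is exactly why I would present the reduction above instead.
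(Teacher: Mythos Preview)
Your proposal is correct and follows exactly the route the paper intends: the paper derives the identity $\mathbf{r}^{(l)}(\mathbf{P},\mathbf{s},\zeta)=\mathbf{r}^{(t)}(\mathbf{P},\mathbf{s},\tfrac{2\zeta}{1+\zeta})$ and then states Theorems~\ref{thPRlimtPreferentialLazy} and~\ref{thPRlimitLazy} as immediate consequences, the substitution $\zeta\mapsto\tfrac{2\zeta}{1+\zeta}$ into Theorems~\ref{thPRlimtPreferential} and~\ref{thPRlimit} being precisely your computation. Your remark that $p_o$ is determined by the stationary vector alone (hence unchanged under the reduction) is the one point worth making explicit, and you do so correctly.
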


\subsection{Generalized lazy random Walk}

We can generalize this behavior 
to 
\emph{generalized-lazy-random-walk-PageRank}
by introducing the 
laziness degree $\lambda$.
It means that upon tossing the coin is not fair:
probability of tails is $\lambda$
(and heads $1-\lambda$).

For the generalized lazy walker PageRank we have:

\begin{equation}
\mathbf{r}^{(g)}=(1-\zeta){\cdot}\left(\lambda \mathbf{I} + (1-\lambda) \mathbf{P}\right){\cdot} \mathbf{r}^{(g)} + \zeta {\cdot} \mathbf{s} \label{eqGenLazyPageRankDef}
\end{equation}
where $\mathbf{I}$ is the identity matrix.
\footnote{
We will denote the solution to the equation 
(\ref{eqGenLazyPageRankDef}) with 
$\mathbf{r}^{(g)}(\mathbf{P},\mathbf{s},\zeta,\lambda)$. 
}

One can easily guess relation the the traditional PageRank. 
Let us transform: 
$$
\mathbf{r}^{(g)}-(1-\zeta){\cdot} \lambda \mathbf{I} \mathbf{r}^{(g)}
=(1-\zeta){\cdot}  (1-\lambda) \mathbf{P} {\cdot} \mathbf{r}^{(g)} + \zeta {\cdot} \mathbf{s} 
$$

$$
\left(1-(1-\zeta){\cdot} \lambda\right)  \mathbf{r}^{(g)}
=(1-\zeta){\cdot}  (1-\lambda) \mathbf{P} {\cdot} \mathbf{r}^{(g)} + \zeta {\cdot} \mathbf{s} 
$$

$$
\left(1-\lambda+\zeta \lambda\right)  \mathbf{r}^{(g)}
=(1-\zeta){\cdot}  (1-\lambda) \mathbf{P} {\cdot} \mathbf{r}^{(g)} + \zeta {\cdot} \mathbf{s} 
$$

$$
 \mathbf{r}^{(g)}
=\frac{(1-\zeta){\cdot}  (1-\lambda)}
{1-\lambda+\zeta \lambda}
 \mathbf{P} {\cdot} \mathbf{r}^{(g)} 
+ \frac{\zeta}
{1-\lambda+\zeta \lambda}
 {\cdot} \mathbf{s} 
$$

 This means that 
 $\mathbf{r}^{(g)} $   for $\zeta$ 
is the same as 
 $\mathbf{r}^{(t)} $   for $\frac{\zeta}
{1-\lambda+\zeta \lambda} $ 
($\mathbf{r}^{(g)}(\mathbf{P},\mathbf{s},\zeta,\lambda)=
 \mathbf{r}^{(t)}(\mathbf{P},\mathbf{s},\frac{\zeta}
{1-\lambda+\zeta \lambda})$)

\Bem{
zeta_t=  \frac{\zeta}{1-\lambda+\zeta \lambda} 

$$p_o\zeta\le (1-\zeta)\frac{|\partial(U)|}{Vol(U)}$$

$$p_o\frac{\zeta}{1-\lambda+\zeta \lambda}\le (1-\frac{\zeta}{1-\lambda+\zeta \lambda})\frac{|\partial(U)|}{Vol(U)}$$

$$p_o{\zeta}\le ({1-\lambda+\zeta \lambda}-{\zeta} )\frac{|\partial(U)|}{Vol(U)}$$
$$p_o{\zeta}\le (1-\lambda)(1-\zeta)\frac{|\partial(U)|}{Vol(U)}$$
}%\Bem

Under these circumstances we have 
\begin{theorem} \label{thPRlimtPreferentialLazyGeneralized}
For the preferential generalized lazy personalized PageRank we have
$$p_o {\zeta}\le (1-\lambda)(1-\zeta)\frac{|\partial(U)|}{Vol(U)}$$
\noindent % where $\partial(U)$ is the set of edges leading from $U$ to the nodes outside of $U$ (the so-called ``edge boundary of $U$''), hence $|\partial(U)|$ is the cardinality of the boundary, and $Vol(U)$, called volume or capacity of $U$ is the sum of out-degrees of all nodes from $U$.
\end{theorem}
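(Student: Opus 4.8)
The plan is to piggyback entirely on the reduction established immediately before the statement and on Theorem~\ref{thPRlimtPreferential}, so that no new probabilistic or flow argument is needed. First I would recall that the generalized lazy walk solution coincides with a traditional PageRank solution with a rescaled boring factor: $\mathbf{r}^{(g)}(\mathbf{P},\mathbf{s},\zeta,\lambda)=\mathbf{r}^{(t)}(\mathbf{P},\mathbf{s},\zeta_t)$ with $\zeta_t=\frac{\zeta}{1-\lambda+\zeta\lambda}$. Since these two stationary vectors are literally equal, the mass $p_o$ residing outside $U$ is the same number in both models, so any bound on $p_o\zeta_t$ furnished by the ``traditional'' theorem converts into a bound on $p_o\zeta$ after clearing denominators.

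Next I would verify that $\zeta_t\in(0,1)$ whenever $\zeta\in(0,1)$ and $\lambda\in[0,1)$: positivity is immediate, and $\zeta_t<1$ follows from the identity $1-\lambda+\zeta\lambda-\zeta=(1-\lambda)(1-\zeta)>0$. Hence Theorem~\ref{thPRlimtPreferential} applies verbatim with boring factor $\zeta_t$ and the hub-preferring preference vector $\mathbf{s}$ of the form (\ref{eq-hub}), giving $p_o\zeta_t\le(1-\zeta_t)\frac{|\partial(U)|}{Vol(U)}$.

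Finally I would substitute and simplify. Using $1-\zeta_t=\frac{(1-\lambda)(1-\zeta)}{1-\lambda+\zeta\lambda}$, the inequality reads $p_o\frac{\zeta}{1-\lambda+\zeta\lambda}\le\frac{(1-\lambda)(1-\zeta)}{1-\lambda+\zeta\lambda}\cdot\frac{|\partial(U)|}{Vol(U)}$, and multiplying both sides by the strictly positive quantity $1-\lambda+\zeta\lambda$ yields precisely $p_o\zeta\le(1-\lambda)(1-\zeta)\frac{|\partial(U)|}{Vol(U)}$, as claimed.

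I do not anticipate a genuine obstacle; the only care needed is bookkeeping — ensuring $p_o$ refers to the same vector on both sides of the reduction and that $\zeta_t$ really stays in $(0,1)$ so the cited theorem is legitimately invoked. A consistency check at $\lambda=0$ (recovering Theorem~\ref{thPRlimtPreferential}) and at $\lambda=\tfrac12$ (recovering Theorem~\ref{thPRlimtPreferentialLazy}) confirms the formula.
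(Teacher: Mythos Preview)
Your proposal is correct and follows essentially the same approach as the paper: reduce the generalized lazy walk to the traditional PageRank via the rescaled boring factor $\zeta_t=\frac{\zeta}{1-\lambda+\zeta\lambda}$, apply Theorem~\ref{thPRlimtPreferential} at $\zeta_t$, and clear the common denominator $1-\lambda+\zeta\lambda$ using $1-\lambda+\zeta\lambda-\zeta=(1-\lambda)(1-\zeta)$. Your added checks that $\zeta_t\in(0,1)$ and the sanity checks at $\lambda=0$ and $\lambda=\tfrac12$ are more than the paper spells out, but the argument is identical in substance.
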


\begin{theorem} \label{thPRlimitLazyGeneralized}
For the uniform generalized lazy personalized PageRank we have

$$p_o{\zeta}\le (1-\lambda)(1-\zeta)\frac{|\partial(U)| }{|U| \min_{k\in U} deg(k)}
$$
\noindent % where $\partial(U)$ is the set of edges leading from $U$ to the nodes outside of $U$ (the so-called ``edge boundary of $U$''), hence $|\partial(U)|$ is the cardinality of the boundary. 
\end{theorem}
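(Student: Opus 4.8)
The plan is to piggy‑back on Theorem~\ref{thPRlimit} through the reduction to the traditional PageRank that was derived just above the statement. Recall that we showed $\mathbf{r}^{(g)}(\mathbf{P},\mathbf{s},\zeta,\lambda)=\mathbf{r}^{(t)}(\mathbf{P},\mathbf{s},\zeta_t)$ with $\zeta_t=\frac{\zeta}{1-\lambda+\zeta\lambda}$. Since the two solution vectors are literally equal, the mass $p_o$ contained in the nodes outside $U$ is the same whether read off the generalized‑lazy vector (for parameters $\zeta,\lambda$) or off the traditional vector (for parameter $\zeta_t$); moreover the preference vector $\mathbf{s}$ is the uniform‑on‑$U$ vector of~(\ref{eq-uni}) in both cases, so Theorem~\ref{thPRlimit} applies verbatim to the latter.

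First I would check that $\zeta_t$ is a legitimate boring factor, i.e. that $\zeta_t\in(0,1]$: since $\lambda\in[0,1]$ and $\zeta\in(0,1)$, the denominator $D:=1-\lambda+\zeta\lambda=1-(1-\zeta)\lambda$ lies in $[\zeta,1]$, so $D>0$ and $\zeta_t=\zeta/D\in(0,1]$. Then Theorem~\ref{thPRlimit} gives
\[
p_o\,\zeta_t\le(1-\zeta_t)\frac{|\partial(U)|}{|U|\min_{k\in U}deg(k)}.
\]
Multiplying both sides by $D>0$ (which, being positive, preserves the inequality) yields
\[
p_o\,\zeta\le(D-\zeta)\frac{|\partial(U)|}{|U|\min_{k\in U}deg(k)},
\]
and the routine identity $D-\zeta=1-\lambda+\zeta\lambda-\zeta=(1-\lambda)-\zeta(1-\lambda)=(1-\lambda)(1-\zeta)$ turns this into exactly the claimed bound. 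As a sanity check, putting $\lambda=\tfrac12$ gives $(1-\lambda)(1-\zeta)=\frac{1-\zeta}{2}$, recovering Theorem~\ref{thPRlimitLazy}.

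As for obstacles, there is none of real substance; the only points needing a line of care are (a) positivity of $D$, so that clearing it does not flip the inequality, and (b) the observation that both $p_o$ and the preference vector $\mathbf{s}$ transfer unchanged across the reduction (the latter because the reduction leaves $\mathbf{s}$ fixed and only rescales the boring factor). An alternative, self‑contained route would be to repeat the pipe‑net argument preceding Theorem~\ref{thPRlimit} for the operator $\lambda\mathbf{I}+(1-\lambda)\mathbf{P}$, noting that each outgoing link of a node now carries $(1-\zeta)(1-\lambda)\mathfrak{a}_j/deg(j)$ instead of $(1-\zeta)\mathfrak{a}_j/deg(j)$; this would reproduce the factor $(1-\lambda)(1-\zeta)$ directly, but one would then have to bookkeep the self‑loop contribution to the steady‑state authority at each node, which the reduction sidesteps entirely. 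I would therefore present the reduction argument as the proof.
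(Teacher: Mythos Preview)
Your proposal is correct and follows essentially the same approach as the paper: the paper derives the reduction $\mathbf{r}^{(g)}(\mathbf{P},\mathbf{s},\zeta,\lambda)=\mathbf{r}^{(t)}(\mathbf{P},\mathbf{s},\zeta_t)$ with $\zeta_t=\frac{\zeta}{1-\lambda+\zeta\lambda}$ and then states the theorem as an immediate consequence, the intended computation being precisely your substitution of $\zeta_t$ into Theorem~\ref{thPRlimit} followed by clearing the denominator and the factorization $1-\lambda+\zeta\lambda-\zeta=(1-\lambda)(1-\zeta)$. Your explicit check that $D>0$ and your remark that $p_o$ and $\mathbf{s}$ transfer unchanged are welcome clarifications that the paper leaves tacit.
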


\subsection{Random Walk with Backstep}

In \cite{Sydow:2004,Sydow:2004www,Sydow:2005fi} on the other hand so-called
\emph{random-walk-with-backstep-PageRank} (RBS) was introduced.
It differs from the traditional PageRank 
in that a random walker with probability 
$\beta$ chooses to click the backstep button of the 
browser, otherwise 
   (just like 
the ordinary random walker)
that is with   probability $\zeta$ he gets bored and jumps to any page
and with the remaining probability $(1-\beta\zeta)$ he goes to a uniformly chosen child of the page.

\footnote{
We will denote the stationary distribution under this behaviour as 
(\ref{eqGenLazyPageRankDef}) with 
$\mathbf{r}^{(b)}(\mathbf{P},\mathbf{s},\zeta,\beta)$. 
}

Let us first consider a simplification of this 
walk where the random walker after going back one step does not
step back further (his momwentary $\beta$ drops to zero). 
\footnote{
We will denote the stationary distribution under this behaviour as 
(\ref{eqGenLazyPageRankDef}) with 
$\mathbf{r}^{(b_1)}(\mathbf{P},\mathbf{s},\zeta,\beta)$. 
}
Under RBS settings the PageRank  of a node $j$ gets an authority of say  $p_j$ from its ``parents''
(that is: the parents in the network and any node that is jumped from upon walker getting bored) and an authority of say $c_j$ from its ``children'' (i.e. the children in the network and any node that is jumped to upon walker getting bored at $j$). Of course $r_j^{(b_1)}=p_j+c_j$. In the next step $\beta p_j$ is given away to ``parents'' of $j$ by backstep,
while the ``children'' get then $(1-\beta)p_j+c_j$, so that the ``children'' give back again $\beta((1-\beta)p_j+c_j)$. Upon stationary distribution we must have:
$c_j=\beta((1-\beta)p_j+c_j)$, hence
$(1-\beta)c_j=\beta (1-\beta)p_j $, and finally
$c_j=\beta p_j$, so that
$r_j^{(b_1)}=p_j+c_j=p_j+\beta p_j=(1+\beta)p_j$.
Hence, $c_j=\frac{\beta}{1+\beta}  r_j^{(b_1)}$.
This means the ``children'' get

$$ (1-\beta)p_j+c_j = (1-\beta)p_j+\beta p_j
=  p_j=\frac{1}{1+\beta}r_j^{(b_1)}$$

\noindent Out of this amount $\zeta r_j^{(b_1)}$ is distributed all over the network by boring jump, while the remaining authority  is assigned to real children 
$\frac{1}{1+\beta}r_j^{(b_1)}- \zeta r_j^{(b_1)}
=\left(\frac{1}{1+\beta}-\zeta\right) r_j^{(b_1)}
$
  in a walk.

Summarizing

$$\mathbf{p}=
     \mathbf{P} {\cdot}
\left(\frac{1}{1+\beta}-\zeta\right)\mathbf{r}^{(b_1)}
+  \zeta  \mathbf{s}$$

\noindent and

$$\mathbf{c}=\frac{\beta}{1+\beta}  \mathbf{r}^{(b_1)}
$$

\noindent Hence the equation below:

\begin{equation}
\mathbf{r}^{(b_1)}=\mathbf{c}+\mathbf{p}=
\frac{\beta}{1+\beta}  \mathbf{r}^{(b_1)}
+ 
\left(\frac{1}{1+\beta}-\zeta\right)  \mathbf{P} {\cdot}\mathbf{r}^{(b_1)}
+    \zeta \mathbf{s}
\label{eqRBSDef}
\end{equation}
 
But let us transform equation (\ref{eqRBSDef}):
$$
\mathbf{r}^{(b_1)}=
 \left(\frac{\beta}{1+\beta}  \mathbf{I} +
\left(\frac{1}{1+\beta}-\zeta\right) \mathbf{P} \right) {\cdot}
\mathbf{r}^{(b_1)} 
+  \zeta  {\cdot}  \mathbf{s}
$$
 
$$
\mathbf{r}^{(b_1)}=
(1-\zeta) 
 \left(\frac{\beta}{(1+\beta)(1-\zeta)}  \mathbf{I} +
\left(\frac{1}{(1+\beta)(1-\zeta)}-\frac{\zeta}{1-\zeta}\right) \mathbf{P} \right) 
{\cdot}
\mathbf{r}^{(b_1)}
+  \zeta   {\cdot}  \mathbf{s}
$$

\noindent Comparing this to the equation (\ref{eqGenLazyPageRankDef}) we see that this is a special case of generalized lazy random walk PageRank.
That is:
$
\mathbf{r}^{(b_1)}(\mathbf{P},\mathbf{s},\zeta,\beta)
=
\mathbf{r}^{(g)}(\mathbf{P},\mathbf{s},\zeta,\frac{\beta}{(1+\beta)(1-\zeta)} )$.

Now transform  the equation (\ref{eqRBSDef}) differently:

$$
\left(1-\frac{\beta}{1+\beta} \right)\mathbf{r}^{(b_1)}=
\left(\frac{1}{1+\beta}-\zeta\right)  \mathbf{P} {\cdot}\mathbf{r}^{(b_1)}
+    \zeta \mathbf{s}
$$

$$
 \frac{1}{1+\beta}  \mathbf{r}^{(b_1)}=
\left(\frac{1}{1+\beta}-\zeta\right)  \mathbf{P} {\cdot}\mathbf{r}^{(b_1)}
+    \zeta \mathbf{s}
$$

$$
   \mathbf{r}^{(b_1)}=
\left(1-\zeta(1+\beta)\right)  \mathbf{P} {\cdot}\mathbf{r}^{(b_1)}
+    \zeta(1+\beta) \mathbf{s}
$$

\noindent 
That is:
$
\mathbf{r}^{(b_1)}(\mathbf{P},\mathbf{s},\zeta,\beta)
=
\mathbf{r}^{(t)}(\mathbf{P},\mathbf{s},\zeta(1+\beta) )$. 
meaning that the RBS with single backstep  can be quite well reflected by traditional PageRank.
  This would imply that 
two theorems \ref{thPRlimtPreferential} and \ref{thPRlimit}
are effective not only for traditional but also for random walk with (single) backstep via slight change to the $\zeta$ coefficient.  
But we have to note   that we in fact assume here a limitation
on the probability of 
  a single back-step: $\frac{1}{1+\beta}>\zeta$. 

If we allow for multiple backsteps in row,   not only $p_j$ is subject to 
backstep, but also $c_j$. 
So  a node $j$,
with authority  $r_j^{(b)}=p_j+c_j$, 
in the next step gives away to "parents"  authority of $\beta (p_j+c_j)$  by backstep.

Its "children"
  get then 
$(1-\beta)r_j^{(b)}$.

Subsequently 
  the "children" give back again 
$\beta(1-\beta)r_j^{(b)}$.
But only this? No. 
While the authority passes down the children, 
eventual backstep may occur many steps away, providing probability masses of  
$\beta^2(1-\beta)^2r_j^{(b)}$, 
$\beta^3(1-\beta)^3r_j^{(b)}$, ... .
Most of these masses will occur multiple times because of 
different interleaving of steps forward and backward. 
To imagine how it works think of a Pascal triangle for the series of expressions 
$\left((1-\beta)+\beta\right)^n$ where 
$\beta$ mass passes down on the first level to the right branch. Going to the right and walking there downwards 
passing the symmetrical is just the backward return point
(same number of moves to the left and to the right standing for forward and backward moves).
Given that $\beta<1-\beta$ that is $\beta<0.5$,
the amount of mass to the right of the symmetrical will drop down to zero, meaning that walks forward/backward 
with forward being up to the last move frequent than  backward will move to the symmatral (nearly) the whole mass $\beta$. 

The other argument is that of stable state where the mass lost for "parents" has to be provided by the "children".  

Upon stationary distribution we must have then  
$c_j=\beta r_j $, hence 

Summarizing

$$\mathbf{p}=
     \mathbf{P} {\cdot}
\left(\frac{1}{1+\beta}-\zeta\right)\mathbf{r}^{(b_1)}
+  \zeta  \mathbf{s}$$

\noindent and

$$\mathbf{p}=
  (1-\zeta-\beta){\cdot}  \mathbf{P} {\cdot}
 \mathbf{r}^{(b)}
+   \zeta {\cdot}  \mathbf{s}
$$ 
and 
$$\mathbf{c}= \beta  \mathbf{r}^{(b)} 
$$
 
  Hence the equation below.  
\begin{equation}
\mathbf{r}^{(b)}=\mathbf{c}+\mathbf{p}=
\beta  \mathbf{r}^{(b)} 
+  (1-\zeta-\beta){\cdot}  \mathbf{P} {\cdot}
 \mathbf{r}^{(b)}
+   \zeta {\cdot}  \mathbf{s}
\label{eqRBSDefTwo}
\end{equation}
which is again easily translated to generalized lazy walk and traditional random walk, i.e. 
$
\mathbf{r}^{(b)}(\mathbf{P},\mathbf{s},\zeta,\beta)
=
\mathbf{r}^{(t)}(\mathbf{P},\mathbf{s},\frac{\zeta}{1- \beta} )$. 
This would imply that 
two theorems \ref{thPRlimtPreferential} and \ref{thPRlimit}
are effective not only for traditional but also for random walk with (multiple) backstep under modification of $\zeta$.

It is, however, to be remembered that $\beta<\frac12$. 
This is not surprising because you cannot go backward more often than you go forward in a browser. 
 
Note that in the paper \cite{Huang:2006:WCI} 
also some kind of random walk with "backstep" is considered, while the formulas the authors come at are apparently simpler (closed-form solutions).
However, we shall note here two things: 
(1) they concentrate there on walks with one step forward followed by one step backward, but (2) 
their step backward is not the intrinsic usage of backstep bottom by an internaut because on backstep they do not return to the previous page but rather to any of the inlinking pages. 
So our analysis does not apply to their setting nor their to ours.

\section{Bipartite PageRank}

Some non-directed graphs occurring e.g. in social networks
are in a natural way bipartite graphs. That is there exist nodes of two modalities and meaningful links may occur only between nodes of distinct modalities (e.g. clients and items purchased by them\Bem{, Fig.\ref{dwudzielny}}).

Some literature exists already for such networks attempting to adapt PageRank to the specific nature of bipartite graphs, e.g. \cite{DLK09}. Whatever investigations were run, apparently no generalization of theorem \ref{thPRlimit} was formulated.

One seemingly obvious choice would be to use the traditional PageRank, like it was done in papers \cite{Link:2011,Bauckhage:2008}. But this would be conceptually wrong because the nature of the super-node would cause authority flowing between nodes of the same modality which is prohibited by the definition of these networks.

Therefore in this paper we intend to close this conceptual gap
using  Bipartite PageRank concept created in our former paper \cite{Bipartite:2012}  and will extend the Theorem \ref{thPRlimit} to this case.

So let us consider the flow of authority in a bipartite network with two distinct super-nodes: one collecting the authority from items and passing them to clients, and the other the authority from clients  and passing them to items.

\begin{equation}
\mathbf{r}^p=(1-\zeta^{kp}){\cdot}\mathbf{P}^{kp}{\cdot} \mathbf{r}^k + \zeta^{kp}{\cdot}\mathbf{s}^p \label{eqBPRitem}
\end{equation}
\begin{equation}
\mathbf{r}^k=(1-\zeta^{pk}){\cdot}\mathbf{P}^{pk}{\cdot} \mathbf{r}^p+\zeta^{pk}{\cdot}\mathbf{s}^k \label{eqBPRclient}
\end{equation}

The following notation is used in these formulas
\begin{itemize}
\item $\mathbf{r}^p$, $\mathbf{r}^k$, $\mathbf{s}^p$, and $\mathbf{s}^k$ are stochastic vectors, i.e. the non-negative elements of these vectors sum to 1;
\item the elements of matrix $\mathbf{P}^{kp}$ are: if there is a link from page $j$ in the set of $Clients$ to a page $i$ in the set of $Items$, then $p^{kp}_{ij}=\frac{1}{outdeg(j)}$, otherwise $p^{kp}_{ij}=0$;
\item the elements of matrix $\mathbf{P}^{pk}$ are: if there is a link from page $j$ in the set of $Items$ to page $i$ in the set of $Clients$, then $p^{pk}_{ij}=\frac{1}{outdeg(j)}$, and otherwise $p^{pk}_{ij}=0$;
\item $\zeta^{kp}\in [0,1]$ is the boring factor when jumping from \emph{Clients} to \emph{Items};
\item $\zeta^{pk}\in [0,1]$ is the boring factor when jumping from {Items} to {Clients}.
\end{itemize}

\Bem{
\begin{figure}
   \centering
   \includegraphics[width=0.9\textwidth]{\rys{}newtheorem.jpg}\\
\caption{Two sets of preferred nodes in a bipartite graph}
\label{newtheorem}
\end{figure}

}%\Bem

\begin{definition}
The solutions $\mathbf{r}^p$ and $\mathbf{r}^k$ of the equation system (\ref{eqBPRitem})  and (\ref{eqBPRclient}) will be called item-oriented and client-oriented bipartite PageRanks,  resp.
\end{definition}

Let us assume first that

\[\zeta^{pk}=\zeta^{kp}=0\]

\noindent i.e. that the super-nodes have no impact.

Let $K=\sum_{j\in Clients}outdeg(j)=\sum_{j\in Items}outdeg(j)$ mean the number of edges leaving one of the modalities.
Then for any $j\in Clients$ we have $r^{k}_{j}=\frac{outdeg(j)}{K}$, and for any $j\in Items$ we get $r^{p}{j}=\frac{outdeg(j)}{K}$. Because through each channel the same amount of $\frac{1}{K}$ authority is passed,  within each bidirectional link the amounts passed cancel out each other. So the $\mathbf{r}$'s defined this way are a fix-point (and solution) of the equations (\ref{eqBPRitem})  and (\ref{eqBPRclient}).

For the other extreme, when $\zeta^{kp}=\zeta^{pk}=1$ one obtains, that $\mathbf{r}^p=\mathbf{s}^p$, $\mathbf{r}^k=\mathbf{s}^k$.

In analogy to the traditional PageRank let us note at this point that for $\zeta^{kp}, \zeta^{pk}>0$ the ``fan''-nodes of both the modalities (the sets of them being denoted with $U^p$ for items and $U^k$ for clients), will obtain in each time step from the super-nodes the amount of authority equal to  $\zeta^{pk}$ for clients and $\zeta^{pk}$ for products, resp.

Let us now think about a fan of the group of nodes $U^p, U^k$ who jumps uniformly,  Assume further that at the moment $t$
we have the following state of authority distribution:
node $j$ contains $r^{k}_{j}(t)=\frac{1}{|U^k| }, r^{p}_{j}(t)=\frac{1}{|U^p| }$
(meaning analogous formulas for $r^p$ and $r^k$). Let us consider now the moment $t{+}1$. From the product node $j$ to the first super-node the authority $\zeta^{pk} \frac{1}{|U^p| }$ flows, and into each outgoing link $(1-\zeta^{pk}) \frac{1}{|U^p| deg(j)}$ is passed. On the other hand the client node $c$ obtains from the same super-node authority $\zeta^{pk} \frac{1}{|U^k| }$,
while from link   ingoing from j $(1-\zeta^{pk})  \frac{1}{|U^p| deg(j)}$.
the authority from clients to products passes in the very same way.

We have a painful surprise this time. In general we cannot define a useful state of authority of nodes, analogous to that of traditional PageRank from the previous section,  so that in both directions between $U^p$ and $U^k$ nodes the same upper limit of authority would apply.  This is due to the fact that in general capacities of $U^k$ and $U^p$ may differ. Therefore a broader generalization is required.

To find such a generalization let us reconsider the way how we can limit the flow of authority in a single channel. The amount of authority passed consists of two parts: a variable one being a share of the authority at the feeding end of the channel and a fixed one coming from a super-node. So, by increasing the variable part we come to the point that the receiving end gets less authority that was there on the other end of the channel.
 
Let us seek the amount of authority $d$ such that multiplied by the number of out-links of a sending node will be not lower than the authority of this node and that after the time step
its receiving node would have also amount of authority equal or lower than $d$ multiplied by the number of its in-links.
That is we want to have that:

$$ d{\cdot}(1-\zeta^{pk}) + \frac{\zeta^{pk}}{\sum_{v \in U^k}outdeg(v)}  \le d $$

The above relationship corresponds to the situation 
that 
on the one hand if a node in $Items$ has at most $d$ amount of authority per link, then it sends to a node in $Clients$ 
at most $d{\cdot}(1-\zeta^{pk})$ authority via the link .
The receiving node $j$ on the other hand, if it belongs to 
$U^k$, then it gets additionally from the supernode exactly
$\frac{\zeta^{pk}}{| U^k| deg(j)}$  authority per its link.
We seek a $d$ such that these two components do not exceed $d$ together. 

If we look from the perspective of pasing authority from $Clients$ to $Items$, then, for similar reasons  at the same time we have 
$$d{\cdot}(1-\zeta^{kp}) + \frac{\zeta^{kp}}{| U^p|deg(j)} \le d  $$

This implies immediately, that

$$d\ge \frac{1}{| U^k|\min_{j\in U^k}deg(j)} $$
and
$$d\ge \frac{1}{| U^p|\min_{j\in U^p}deg(j)}
$$
so we come to a satisfactory $d$ when  
$$d=\max(\frac{1}{| U^k|\min_{j\in U^k}deg(j)}
,       \frac{1}{| U^p|min_{j\in U^p}deg(j)})
$$ $$
=\frac{1}{\min(| U^k|\min_{j\in U^k}deg(j)
,     | U^p|min_{j\in U^p}deg(j))}
$$

Now we are ready to formulate a theorem for bipartite PageRank
analogous to the preceeding theorem~\ref{thPRlimit}.

\begin{theorem} \label{thBPRlimit}
For the uniform personalized bipartite PageRank we have
$$p_{k,o}\zeta^{kp}\le 
\frac{(1-\zeta^{pk}) \partial(\frac{U^p}{U^k}) }{min(| U^k|min_{j\in U^k}deg(j)
,     | U^p|min_{j\in U^p}deg(j))}
$$
and
$$p_{p,o}\zeta^{pk}\le 
\frac{(1-\zeta^{kp}) \partial(\frac{U^k}{U^p}) }{min(| U^k|min_{j\in U^k}deg(j)
,     | U^p|min_{j\in U^p}deg(j))}
$$
where
\begin{itemize}
\item $p_{k,o}$ is the sum of authorities from the set $Clients\backslash U^k$,
\item $p_{p,o}$ is the sum of authorities from the set $Items \backslash U^p$,
\item $\partial(\frac{U^k}{U^p})$ is the set of edges outgoing from $U_k$ into nodes from $Items-U_p$ (that is ``fan's border'' of $U^k$),
\item $\partial(\frac{U^p}{U^k})$ is the set of edges outgoing from $U^p$ into nodes from $Clients \backslash U^k$ (that is ``fan's border'' of $U^p$),
\end{itemize}
\qed
\end{theorem}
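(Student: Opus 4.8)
The plan is to follow the proof of Theorem~\ref{thPRlimit} closely, in the pipe-net picture, with the single per-link bound $\gamma$ there replaced by the quantity $d$ built immediately above the statement, and with the complement $N\setminus U$ replaced by $Clients\setminus U^k$ for the first inequality and by $Items\setminus U^p$ for the second.

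First I would set up the per-link invariant. If the iteration is started from the configuration $r^k_j=\frac{1}{|U^k|}$ on $U^k$, $r^p_i=\frac{1}{|U^p|}$ on $U^p$, and $0$ elsewhere, then every node carries at most $d$ units of authority per incident link, since $\frac{1}{|U^k|deg(j)}\le\frac{1}{|U^k|\min_{v\in U^k}deg(v)}\le d$ and symmetrically on $U^p$. Exactly as in the computation preceding the theorem, I would then show by induction on the time step that the property ``at most $d$ per incident link'' is preserved: an Items node holding $\le d$ per link holds $\le deg\cdot d$ in total and so forwards $\le (1-\zeta^{pk})d$ along each outgoing edge, whence a receiving Clients node in $U^k$, which additionally collects $\frac{\zeta^{pk}}{|U^k|deg(j)}$ per link from its super-node, still stays $\le d$ by the first defining inequality for $d$; a Clients node outside $U^k$ receives no super-node share and is easier; and the Clients-to-Items direction is symmetric, using the second defining inequality. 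Since the solution of (\ref{eqBPRitem})--(\ref{eqBPRclient}) is unique, the invariant holds at the stationary point, so there every Items-to-Clients edge carries $\le (1-\zeta^{pk})d$ and every Clients-to-Items edge carries $\le (1-\zeta^{kp})d$.

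Next I would apply conservation of authority to the set $Clients\setminus U^k$. At the stationary solution its total authority $p_{k,o}$ is constant, so the authority entering it in one step equals the authority leaving it. The decisive observation is that \emph{no} authority enters $Clients\setminus U^k$ through a super-node: the Items-to-Clients super-node redistributes only over $U^k$ because $\mathbf{s}^k$ is supported there, and the Clients-to-Items super-node feeds only Items nodes. Hence the entire inflow arrives along edges coming from Items, whereas the outflow consists of the part forwarded along edges back to Items plus the amount $\zeta^{kp}p_{k,o}$ siphoned into the Clients-to-Items super-node. Discarding the non-negative edge-outflow, $\zeta^{kp}p_{k,o}$ is therefore at most the total authority entering $Clients\setminus U^k$ along edges from Items, which by the Step-one bound is at most $(1-\zeta^{pk})\,d\,|\partial(\frac{U^p}{U^k})|$; substituting the value of $d$ gives the first claimed inequality, and the second is its mirror image under exchange of the two modalities.

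The main obstacle is to set up the conservation argument so that $p_{k,o}$ is isolated: this is why one must work with $Clients\setminus U^k$, exploiting that neither super-node injects authority there, rather than with $U^k\cup U^p$, where conservation would only bound the sum $\zeta^{kp}p_{k,o}+\zeta^{pk}p_{p,o}$ jointly. Related to this is the edge count that closes the argument: one has to check that the edges actually delivering authority into $Clients\setminus U^k$ are precisely those counted by $\partial(\frac{U^p}{U^k})$, i.e. the ones issuing from $U^p$ --- this is clear when every Items-neighbour of a node of $Clients\setminus U^k$ lies in $U^p$, while in full generality the count on the right should be that of all edges from Items into $Clients\setminus U^k$. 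A last routine point is the per-link invariant for the nodes outside both fan sets, needed so that the bound $(1-\zeta^{pk})d$ is legitimate on every Items-to-Clients edge and not merely on those incident to $U^p$; it is immediate, since such nodes never receive anything from a super-node.
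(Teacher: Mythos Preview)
Your proposal is correct and follows essentially the same route as the paper: establish the per-link invariant $d$ (exactly as the paper derives it just before the theorem), then combine it with a conservation-of-authority argument to bound the super-node leakage by an edge count times $(1-\zeta^{\cdot})d$.

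There is one cosmetic difference worth noting. The paper phrases its balance on the tandem $U^k\cup U^p$ (``no authority leaves the tandem via super-nodes''), whereas you run the balance on $Clients\setminus U^k$, exploiting that neither super-node injects anything there. Your choice is actually the cleaner one for isolating $p_{k,o}\zeta^{kp}$ separately from $p_{p,o}\zeta^{pk}$; the paper's tandem-level conservation, taken at face value, would naturally yield only the sum of the two, and its claimed identity between the flow $\overline{U^p}\to U^k$ and the flow $U^p\to\overline{U^k}$ is not further justified. Your caveat about the edge count---that in full generality the inflow into $Clients\setminus U^k$ also contains edges from $Items\setminus U^p$, not only those in $\partial(U^p/U^k)$---is a genuine observation; the paper's proof glosses over the same point.
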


The proof  is analogous as in case of classical PageRank, using now the quantity $d$ we have just introduced.

\begin{proof}
Let us notice first that, due to the closed loop of authority circulation, the amount of authority flowing into $U^k$ from the nodes belonging to the set $\overline{U^p} = Items \backslash U^p$ must be identical with the amount flowing out of $U^p$ to the nodes~in~$\overline{U^k}$.
The same holds when we exchange the indices $p<->k$. 

But from $U^p$ only that portion of authority flows out to
$\overline{U^k}$
  that flows out through the boundary of $U^p$ because no authority leaves the tandem  $U^p,U^k$ via super-nodes (it returns from there immediately). As the amount $d  |\partial(\frac{U^p}{U^k})| $ leaves at most the $U^p$ not going into $U^k$, then

$$p_{k,o}\zeta^{kp}\le  d (1-\zeta^{pk}) 
\partial(\frac{U^p}{U^k}) = $$ $$
=\frac{(1-\zeta^{pk})
 \partial(\frac{U^p}{U^k}) }{min(| U^k|min_{j\in U^k}deg(j)
,     | U^p|min_{j\in U^p}deg(j))}
 $$
%\qed
\end{proof}

\Bem{
Note that we can alternatively derive separate upper limits $d^k$ and $d^p$ for the clients' and  items' sub-graphs

$$\Big[d^k{\cdot}(1-\zeta^{kp}) + \frac{\zeta^{kp}}{\sum_{v\in U^p} outdeg(v)}\Big] \cdot (1-\zeta^{pk}) +
\frac{\zeta^{pk}}{\sum_{v \in U_k}outdeg(v)} \le d^k $$

$$\Big[d^p{\cdot}(1-\zeta_{pk}) + \frac{\zeta^{pk}}{\sum_{v \in U^k} outdeg(v)}\Big]\cdot (1-\zeta^{kp}) + \frac{\zeta^{kp}}{\sum_{v \in U^p}outdeg(v)} \le d^p $$
}%Bem 

One topic was not touched above, namely that of convergence.
But the convergence can be looked for in an analogous way as done for the HITS (consult e.g. \cite[Ch. 11]{LM06}).

\section{Experimental exploration of the limits}

With the established limits, we can pose now the question 
how tight the limits are or rather whether we can construct networks for which the limits are approached sufficiently closeLet us first look at some small examples.

 \begin{figure}
   \centering
   \includegraphics[width=0.9\textwidth]{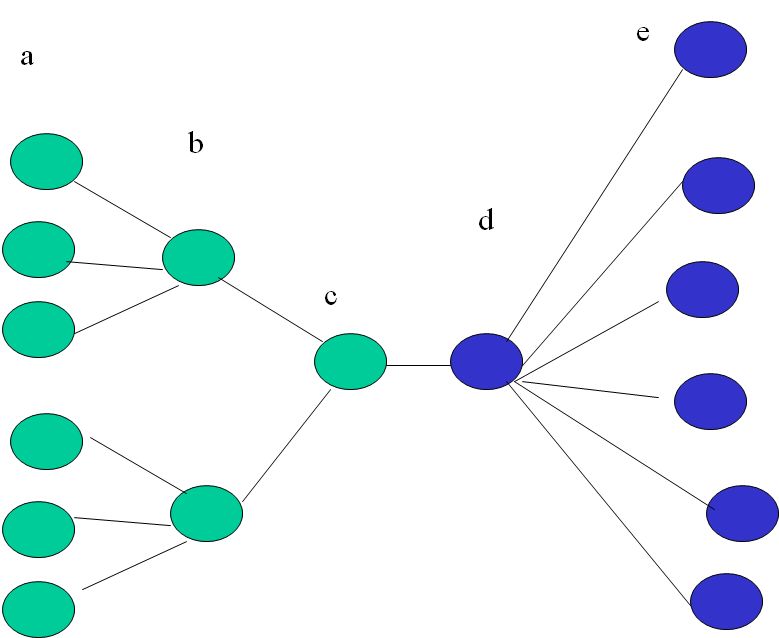}\\
\caption{An unoriented tree-like network}
\label{siecvDrzewo}
\end{figure}
 
 \begin{figure}
   \centering
   \includegraphics[width=0.9\textwidth]{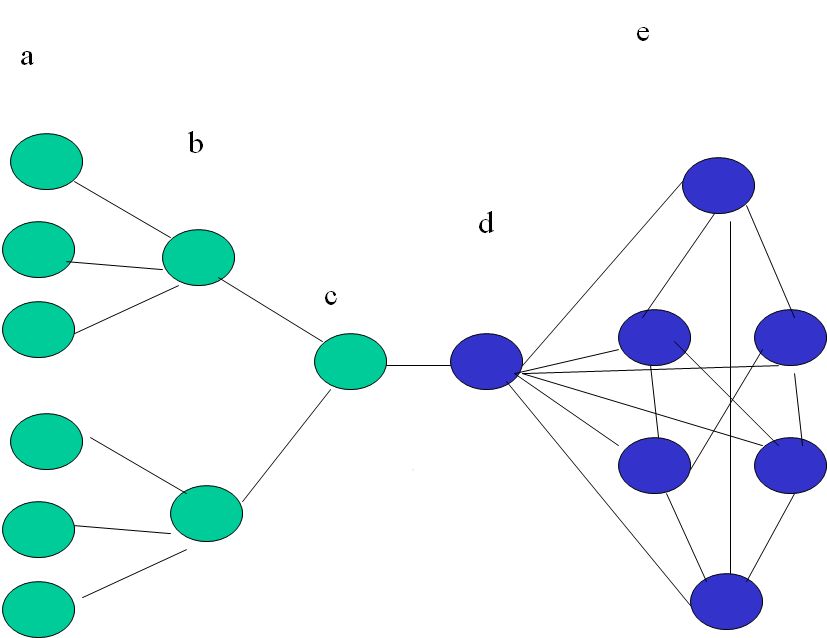}\\
\caption{An unoriented complex network}
\label{siecvZlvozxona}
\end{figure}

For this purpose we will use a family of networks depicted in figures 
\ref{siecvZlvozxona} and \ref{siecvDrzewo}. 
The network is divided into three "zones" of nodes. 
Zones d and e belong to the set of fan-nodes. 
Zones a,b,c are not fan-sets. 
There is only one node in zones c and d
so that the edge connecting d to c is the channel through which the authority flows out of the fan-node set
and we seek the upper limit of authority lost via this link.
The zones are symmetricvally constructed.
The number of nodes in a is a multiple of the numvber of nodes in b. 
All nodes in e are connected to d and otherwise they constitute a regular subgraph. 
In figure  \ref{siecvDrzewo} this subgraph is of degree zero, 
and in  \ref{siecvZlvozxona} it is of degree 3.

Because of symmetry the PageRanks in each of the zones are identical. 

Table \ref{siecvDrzewoPR} shows the PageRanks for the graph in Figure
\ref{siecvDrzewo}.  
Table \ref{siecvZlvozxonaPR} shows the PageRanks for the graph in Figure
\ref{siecvZlvozxona}.  
In each table the columns "zone a",...,"zone e" show the PageRank attained by each node in the respective zone. 
'outflow" column shows the amount of authority flowing out from the fan-set of nodes to the rest of the network. 
"limit" column is the upper limit derived theoretically in the previous sections for the respective case. 
"rel.left" is cmputed as 1-"outflow"/"limit". The lower the value the closer the actual outflow to the theoretical limit.

\begin{table}
\caption{PageRanks for network Fig.\ref{siecvDrzewo}. Boring factor=0.1 }
\label{siecvDrzewoPR}
{
{\footnotesize
\begin{tabular}{p{3.7cm}|lllll}
& zone a & zone b & zone c & zone d & zone e   \\ \hline  \\
traditional uniform 
&   0.012479 
&   0.055464 
&   0.072565 
&   0.370274 
&   0.061892 \\
traditional preferential 
&   0.013019 
&   0.057864 
&   0.075705 
&   0.386296 
&   0.057358   \\ \hline   \\
%\end{tabular}
%\\ \quad fhdfhfdhdfhh hfhfdhfdhfdh  \hfill \break
%\begin{tabular}{p{3.7cm}|llll}
 & outflow & limit
& rel.left \\ \hline  \\
traditional uniform 
&   0.025837 &  0.12857142 & 0.799\\
traditional preferential 
&   0.026955 &   0.069230 & 0.610  \\
\end{tabular}
}
}
\end{table}
 
\begin{table}
\caption{PageRanks for network Fig.\ref{siecvZlvozxona}. Boring factor=0.1 
}\label{siecvZlvozxonaPR}
{\footnotesize
\begin{tabular}{p{3.7cm}|llllllll}
& zone a & zone b & zone c & zone d & zone e  \\ \hline  \\
traditional uniform 
&    0.006094806 & 0.027088026 & 0.035440167 & 0.180837655  &0.115496215
  \\
traditional preferential 
&  0.006306085 & 0.028027043 & 0.036668714 & 0.187106461 & 0.113722372
\\ \hline   \\
%\end{tabular}
%
%\begin{tabular}{p{3.7cm}|llllllll}
 & outflow & limit
& rel.left \\ \hline  \\
traditional uniform 
&  0.0126185 & 0.032142 & 0.6074242 \\
traditional preferential 
&  0.013055 1& 0.029032 &  0.5502957 \\
\end{tabular}

}
\end{table}
 
The obvious tendency to keep authority is observed when then network of connections
is densified between fan nodes. 
Also the outflow of authority gets closer to the theoretical bound. 

How close it can go?
In tables
 \ref{siecvZlvozxonaPR1000p}
and
 \ref{siecvZlvozxonaPR1000u}
 we increase by the factor of 10,100 etc. the number of nodes
in zones a,b and e and also the number of connections between 
the nodes in zone e (enlarging the network of fig.\ref{siecvZlvozxona}).

\begin{table}
\caption{PageRanks for enlarged network Fig.\ref{siecvZlvozxona} by factor in the first column. Boring factor=0.1. Traditional PageRank with preferential authority re-distribution.} \label{siecvZlvozxonaPR1000p}
{\footnotesize 
\begin{tabular}{l|llllllll}
factor& zone a & zone b & zone c & zone d & zone e  \\ \hline  \\
10 
&     1.737833e-05 & 7.723703e-05 & 7.073624e-04 &2.438616e-02 &1.620532e-02
\\ 100 & 
 1.969998e-08 & 8.755548e-08 &7.674967e-06 &2.494130e-03 &1.662448e-03
\\ 1000 & 
 1.995495e-11 & 8.868865e-11 &7.739489e-08 &2.499416e-04 &1.666249e-04
\\ 10000 & 
 1.998069e-14 & 8.880307e-14 &7.745988e-10 &2.499942e-05 &1.666625e-05
\\ 100000 & 
 1.998323e-17 & 8.881436e-17 &7.746628e-12 &2.499994e-06 &1.666662e-06
%\\ 1000000 & 
% 1.997526e-20 & 8.877894e-20 &7.744678e-14 &2.499999e-07 &1.666666e-07
\\ \hline   \\
%\end{tabular}
%
%\begin{tabular}{l|llllllll}
factor & outflow & limit
& rel.left \\ \hline  \\
10 
 &0.0003294802  &  0.0003657049979 & 0.0990544634
\\ 100 & 
 3.700605208e-06 & 3.740632831e-06 & 0.01070076246
\\ 1000 & 
  3.745018664e-08&3.749062578e-08& 0.0010786466
\\ 10000 & 
  3.749501576e-10&3.749906250e-10& 0.000107915 
\\ 100000 & 
  3.749943936e-12&3.749990625e-12& 1.245027547e-05
%\\ 1000000 & 
% 3.748562212e-14&3.749999062e-14& 0.00038316 
\end{tabular}
}
\end{table}

\begin{table}
\caption{PageRanks for enlarged network Fig.\ref{siecvZlvozxona} by factor in the first column. Boring factor=0.1. Traditional PageRank with uniform authority re-distribution.} \label{siecvZlvozxonaPR1000u}
{\footnotesize 
\begin{tabular}{l|llllllll}
factor& zone a & zone b & zone c & zone d & zone e  \\ \hline   
\\ 10  &
  1.679439e-05 & 7.464174e-05  & 6.835939e-04  &  2.356675e-02  &  1.622082e-02
\\ 100  &
1.904272e-08  & 8.463432e-08  & 7.418903e-06  & 2.410918e-03  & 1.662589e-03
\\ 1000  &
 1.928972e-11 &  8.573209e-11  & 7.481482e-08  & 2.416095e-04  & 1.666263e-04
\\ 10000  &
 1.931466e-14 &  8.584294e-14  & 7.487786e-10  & 2.416610e-05 &  1.666626e-05
\\ 100000  &
 1.931710e-17  & 8.585376e-17  & 7.488401e-12  & 2.416661e-06  & 1.666663e-06
\\ 1000000  &
 1.931896e-20  & 8.586206e-20  & 7.488419e-14  & 2.416666e-07  & 1.666666e-07
\\ \hline   \\
%\end{tabular}
%
%
%\begin{tabular}{l|llllllll}
factor&  outflow & limit
& rel.left \\ \hline   
\\ 10  &
  0.0003184092239& 0.0003688524590  &    0.1367572
\\ 100  &
   3.577140044e-06 & 3.743760399e-06  &  0.04450614810 
\\ 1000  &
   3.620173279e-08 & 3.749375104e-08  &  0.03445956209 
\\ 10000  &
   3.624516929e-10 & 3.749937501e-10  &  0.03344604329 
\\ 100000  &
  3.624940904e-12 & 3.749993750e-12  &  0.033347481127
\\ 1000000  &
   3.625220796e-14 & 3.74999937e-14  &  0.033274293139 
\end{tabular}
}
\end{table}

We see that in case of preferential attachment we quickly approach the bounds. 
In case of uniform authority redistribution we get a stabilization. 

The situation changes for uniform case, however, if we densify the connections in zone e.
For the network of the last line we increase the densiy of connectiions uin zone e.

\begin{table}
\caption{PageRanks for densified  network from last line of previous table - the zone e node degrees as in the first column}  \label{siecvZlvozxonaPR1000udens}
{\footnotesize
\begin{tabular}{l|llllllll}
e node deg. & zone a & zone b & zone c & zone d & zone e \\
\hline  
\\ 5000000  &
   1.572210e-20  & 6.987602e-20  &  6.094048e-14  &  1.966666e-07  & 1.666666e-07
\\5500000&
   1.440820e-20  & 6.403644e-20  & 5.585813e-14   & 1.803030e-07   & 1.666666e-07
\\ 5900000&
   1.352272e-20  & 6.010099e-20  &  5.242307e-14 & 1.692090e-07   & 1.666666e-07
\\  5990000&
  1.333886e-20  & 5.928383e-20   & 5.171155e-14   & 1.669171e-07   & 1.666666e-07
\\ 5999000 &
  1.331915e-20  & 5.919623e-20   & 5.163832e-14   & 1.666916e-07   & 1.666666e-07
\\5999900&
 1.332161e-20  & 5.920716e-20   & 5.163986e-14   & 1.666691e-07   & 1.666666e-07
\\ \hline   \\
%\end{tabular}
%
%
%\begin{tabular}{l|llllllll}
e node deg. &   outflow & limit
& rel.left \\
\hline  
\\ 5000000  &
   2.950251336e-14   &  2.999999500e-14   &  0.01658272402
\\5500000&
     2.703801851e-14 & 2.727272272e-14   &  0.0086058227330
\\ 5900000&
      2.537613978e-14 & 2.542372457e-14   &  0.00187166895
\\  5990000&
    2.503123889e-14 & 2.504173205e-14   &  0.00041902692587
\\ 5999000 &
    2.4994569e-14 & 2.500416319e-14   &  0.0003836900900
\\5999900&
   2.49983829e-14   &  2.500041250e-14   &  8.117929671e-05
\end{tabular}
}
\end{table}

Last not least let us observe that the relationship between 
the upper limit and the actual amount of authority passed 
is a function of the structure of the network.
In the tables  \ref{siecvZlvozxonaPR1000sspref} (for preferential redistribution)
and \ref{siecvZlvozxonaPR1000ssuni} (for uniform redistribution)
we see this effect.
For preferential redistribution we see that the lower degrees the nodes are, the bigger part of the authority is flowing out.  
For the uniform redistribution the tendency is in the other direction.

\begin{table}
\caption{PageRanks for various network structures 
with the same upper limit of authority passing - the preferential redistribution.
Zone a and b both 60000 nodes each.}
\label{siecvZlvozxonaPR1000sspref}  
{\footnotesize 
\begin{tabular}{l|llllllll}
e node  deg.  & zone a & zone b & zone c & zone d & zone e   \\
/ count \\
\hline
\\ 511 / 1024 & 
  6.092727e-11 & 1.353939e-10 & 5.370716e-06 & 1.953285e-03 & 9.746382e-04
\\ 255 / 2048&
  6.095403e-11 & 1.354534e-10 & 5.373075e-06 & 3.906379e-03 & 4.863655e-04
\\  127 / 4096 &  
 6.096742e-11 & 1.354832e-10 & 5.374255e-06 & 7.812567e-03 & 2.422291e-04
\\ 63 / 8192 &
  6.097412e-11 & 1.354980e-10 & 5.374845e-06 & 1.562494e-02 & 1.201609e-04
\\ 31 / 16384 &
  6.097746e-11 & 1.355055e-10 & 5.375140e-06 & 3.124970e-02 & 5.912678e-05
\\ 15 /   32768 &
  6.097914e-11 & 1.355092e-10 & 5.375287e-06 & 6.249920e-02 & 2.860973e-05
\\ 7 /65536&  
  6.097997e-11 &  1.355110e-10 & 5.375361e-06 & 1.249982e-01 & 1.335121e-05
\\3  // 131072&
 6.098038e-11  & 1.355119e-10 & 5.375397e-06 & 2.499961e-01 & 5.721944e-06
\\ 1 / 262144 &
  6.098056e-11 & 1.355124e-10 & 5.375413e-06 & 4.999919e-01 & 1.907314e-06
\\ \hline   \\
% \end{tabular}
%
%\begin{tabular}{l|llllllll}
e node  deg.   &   outflow & limit& rel.left \\
/ count \\
\hline
\\ 511 / 1024 & 
   1.714998913e-06 & 1.716610495e-06 & 0.0009388162095
\\ 255 / 2048&
    1.715752081e-06 &  1.716610495e-06 &  0.0005000635487 
\\  127 / 4096 &  
   1.716128933e-06 &  1.716610495e-06 &  0.0002805306660
\\ 63 / 8192 &
   1.71631741e-06 &  1.716610495e-06 &  0.0001707321858 
\\ 31 / 16384 &
   1.716411644e-06 & 1.716610495e-06  & 0.0001158392913
\\ 15 /   32768 &
    1.716458707e-06 &  1.716610495e-06 &  8.842327961e-05 
\\ 7 /65536&  
    1.716482125e-06 &  1.716610495e-06 &  7.478124133e-05 
\\3  // 131072&
   1.716493600e-06 &  1.716610495e-06 &  6.809630427e-05 
\\ 1 / 262144 &
    1.716498849e-06 &  1.716610495e-06 &  6.503878422e-05 
 \end{tabular}

}
\end{table}

\begin{table}
\caption{PageRanks for various network structures 
with the same upper limit of authority passing - the uniform redistribution  
Zone a and b both 60000 nodes each.} \label{siecvZlvozxonaPR1000ssuni}
{\footnotesize 
\begin{tabular}{l|llllllll}
e node  deg.  & zone a & zone b & zone c & zone d & zone e   \\
/ count \\
\hline  
\\ 4 / 131071 &
  4.480253e-11 & 9.956117e-11 & 3.949326e-06 & 1.836718e-01 & 6.228041e-06
\\8 / 65535 &   
4.933356e-11 & 1.096301e-10 & 4.348734e-06 & 1.011236e-01 & 1.371576e-05
\\ 16 /  32767 &
5.196287e-11 & 1.154730e-10 & 4.580507e-06 & 5.325655e-02 & 2.889275e-05
\\32 / 16383 &
 5.339301e-11 & 1.186511e-10 & 4.706573e-06 & 2.736115e-02 & 5.936787e-05
\\64 / 8191 &
5.416866e-11 & 1.203748e-10 & 4.774947e-06 & 1.387931e-02 & 1.203889e-04
\\ 128 / 4095 &
  5.468880e-11 & 1.215307e-10 & 4.820796e-06 & 7.006293e-03 & 2.424855e-04
\\ 256 / 2047 &
  5.545008e-11 & 1.232224e-10 & 4.887903e-06 & 3.551911e-03 & 4.867770e-04
\\ 512 / 1023 &
  5.783015e-11 & 1.285114e-10 & 5.097705e-06 & 1.852184e-03 & 9.756907e-04
\\ \hline   \\
% \end{tabular}
%
%\begin{tabular}{l|llllllll}
e node  deg.   &   outflow & limit& rel.left \\
/ count \\
\hline  
\\ 4 / 131071 &
    1.261114759e-06 &  1.716613769e-06 &  0.2653474055 
\\8 / 65535 &   
   1.388655573e-06 &  1.716613769e-06 &  0.1910494961 
\\ 16 /  32767 &
  1.462666077e-06 &  1.716613769e-06 &  0.147935252
\\32 / 16383 &
    1.502922183e-06 &  1.716613769e-06 &  0.1244843712 
\\64 / 8191 &
   1.524755444e-06 &  1.716613769e-06 &  0.1117655749
\\ 128 / 4095 &
    1.539396343e-06 &  1.716613769e-06 &  0.1032366329 
\\ 256 / 2047 &
   1.560825131e-06 &  1.716613769e-06 &  0.09075345911 
\\ 512 / 1023 &
    1.627819998e-06 &  1.716613769e-06 &  0.05172612086 
 \end{tabular}

}
\end{table}

\section{Concluding Remarks}

In this paper we have  proposed limits for
the flow of  authority in ordinary undirected and in  bipartite graph under uniform random jumps.
We have empirically demonstrated tightness of some of these limits. 

For the ordinary indirected graphs 
we have considered five versions of random walkers for computation of PageRank. Each of them represents semantically different behavior of the surfers and hence the respective PageRank has different commercial values as placement if e.g. advertisement is concerned.

The obtained limits can be used for example when verifying validity of clusters in such graphs. It is quite common to assume that the better the cluster the less authority flows out of it when treating the cluster as the set on which a fan concentrates while a personalized PageRank is computed. The theorem says that the outgoing authority has a natural upper limit dropping with the growth of the  size of the sub-network so that the outgoing authority cluster validity criterion cannot be used because it will generate meaningless large clusters. So a proper validity criterion should make a correction related to the established limits in order to be of practical use.

This research needs to be seen in the broader context of our research efforts. 
Our group \Bem{Institute of Computer Science} is engaged
in developing a semantic search engine covering the whole Polish Internet.
So let us briefly explain the notion of semantic search engine and its impact on ranking method requirements.

Semantics  of information\footnote{Information is the content of a message  sent by a sender  to a recipient  in order to increase the level of knowledge of the recipient. As the sender may not always be aware of the level of recipient  knowledge,
one introduces the so-called intentional information. It means the perception of the information at the source (sender), that is the expected increase in recipient knowledge given the assumption of recipient knowledge and his capabilities to decode the message. } expresses the meaning of this information. In linguistics research on the semantics tries among others to  relate symbols (like words, phrases, characters) to (real) beings  which they mean (so-called denotations) therefore related areas like morphological and syntactic analysis is engaged. Understanding of semantics may prove useful in comprehending pragmatics (the expected acting of the recipient upon obtaining the information)
and apobetics (the goal of the sender when sending the information).

Identification of the meaning of an information has been subject of intense research. So-called ``semantic search'' is deemed to be a method of improvement of search engine response by means  of understanding of user intent as well as of the search terms in the context of the document space. If we take into account advances in natural language research, we easily guess that there is virtually no chance to realize the goal of semantic search, formulated in this way, in near future.
Computers have no chance to understand semantics of textual messages as they have no ``experience'' with the reality surrounding us humans. Access to semantics of real world appears to be a remote goal.

Therefore in our research project NEKST we reformulated in a significant way the task of semantic analysis of Internet documents by understanding the task in an operational way.
Instead of trying to pretend that the machine understands the meaning of the text, we use the fact that both the information sender and the recipient are human beings. Hence not the search engine but the man has to understand the text, and the search engine only supports him in this understanding.
This support has the form of so-called semantic transformations on the text which on the one hand enrich the text with new features extending search characteristics and on the other hand may move the text to other space than the document space that is into the space of objects the documents are about.

So the semantic transformation means such a transformation of the document and/or query content that allows for traditional document search via a semantically related query
 \cite{BEATCA:2010,BEATCA:2011}.

Within the system NEKST the following types of semantic transformations have been implemented:
\begin{itemize}
\item user suggestions,
\item substitution with synonyms, hypernyms, hyponyms  and other related concepts,
\item concept disambiguation,
\item document categorization,
\item personalized PageRank,
\item cluster analysis and assignment of cluster keywords to documents,
\item explicit separation of document cluster and document search,
\item extraction of named entities and relations between them,
\item diversification of responses to queries,
\item dynamic summarizing, and
\item identification and classification of harmful contents.
\end{itemize}

If you take the semantic transformation view then it is obvious that you need all the traditional mechanisms of a search engine also under semantic search, including the ranking as well as clustering mechanisms, because they are actually underpinning nearly all the mentioned semantic transformation.  
So in particular PageRank is to be considered from the emantic transformation point of view.
But also, as stated,  
PageRank itself is a career of semantic information. One usually assumes that a link is added to a page with some semantic relation to the pointed page in mind.
Various considerations brought about variants of PageRank that need to be considered from sewmantic point of view.

In this paper we were able to point at a unified view of a couple of PageRank variants and showed that the derived theorem on authority flow limits can be easily transferred between them. 

This means that one needs in fact only one version of Pagerank algortithm to capture various aspects of semantics and also to consider only one kind of authority flow limits when e.g. applying PageRank based custering methods.

However, it is necessary to take into account the goals of a semantic search engine as well as the developments on the network following the publication of ranking mechanism of PageRank, used then by many search engines.

As a further research direction it is obvious that finding tighter limits is needed. This would improve the evaluation of e.g. cluster quality.

\bibliographystyle{plain}
\bibliography{Trad1_PR_bib,Trad2_PR_bib}

% \addcontentsline{toc}{section}{LaTeX--Strukturelemente}

\end{document}